\documentclass[aps,pra,groupedaddress,superscriptaddress,amsmath,amssymb,showpacs,twocolumn,footinbib,floatfix,notitlepage]{revtex4-1}
\usepackage{graphicx}
\usepackage{bm,braket,dsfont}
\usepackage{color}
\usepackage{mathtools,amsthm}
\usepackage{hyperref}
\usepackage{microtype}



\usepackage{newpxtext,newpxmath}

\usepackage{multirow,array}
\usepackage{mathrsfs}

\let\mathbbm\mathds

\widowpenalty5000
\clubpenalty5000


\renewcommand{\Re}{\mathop{\text{Re}}\nolimits}

\newcommand{\cnot}{\textsc{cnot}}

\DeclareMathOperator{\Tr}{Tr}


\newcommand{\incoh}[1]{{\mathcal{C}^{(#1)}}}
\newcommand{\produc}[1]{{\mathcal{P}^{(#1)}}}
\newcommand{\two}[1]{\underline{2^{d-#1}}}

\renewcommand{\H}{\mathcal{H}}
\newcommand{\Hanc}{\mathcal{H}_{\text{anc}}}
\newcommand{\psianc}{\psi_{\text{anc}}}

\DeclareMathOperator{\SR}{R_{S}}
\DeclareMathOperator{\CR}{R_{C}}
\DeclareMathOperator{\SN}{N_{S}}
\DeclareMathOperator{\CN}{N_{C}}
\DeclareMathOperator{\NR}{R_{N}}
\DeclareMathOperator{\NN}{N_{N}}
\DeclareMathOperator{\ed}{D_{E}}

\newtheorem{theorem}{Theorem}
\newtheorem{lemma}[theorem]{Lemma}
\newtheorem{proposition}[theorem]{Proposition}

\newtheorem{definition}[theorem]{Definition}

\newcommand{\phin}{\phi^{(n)}}
\newcommand{\rhon}{\rho^{(n)}}

\renewcommand{\Lambda}{V}
\renewcommand{\Gamma}{W}

\newcommand\proj[1]{\ket{#1}\bra{#1}}

\newlength\negwidth

\makeatletter

  \adddialect\l@ENGLISH\l@english
    \adddialect\l@en\l@english
\makeatother


\begin{document}

\vspace*{2pt}

\title{Converting multilevel nonclassicality into genuine multipartite entanglement}

\author{Bartosz Regula}
\affiliation{School of Mathematical Sciences and Centre for the Mathematics and Theoretical Physics of Quantum Non-Equilibrium Systems, University of Nottingham, University Park, Nottingham NG7 2RD, United Kingdom}

\author{Marco Piani}
\affiliation{SUPA and Department of Physics, University of Strathclyde, Glasgow G4 0NG, UK}

\author{Marco Cianciaruso}
\affiliation{School of Mathematical Sciences and Centre for the Mathematics and Theoretical Physics of Quantum Non-Equilibrium Systems, University of Nottingham, University Park, Nottingham NG7 2RD, United Kingdom}

\author{Thomas~R.~Bromley}
\affiliation{School of Mathematical Sciences and Centre for the Mathematics and Theoretical Physics of Quantum Non-Equilibrium Systems, University of Nottingham, University Park, Nottingham NG7 2RD, United Kingdom}

\author{Alexander Streltsov}
\affiliation{Faculty of Applied Physics and Mathematics, Gda\'{n}sk University of Technology, 80-233 Gda\'{n}sk, Poland}
\affiliation{National Quantum Information Center in Gda\'{n}sk, 81-824 Sopot, Poland}

\author{Gerardo Adesso}
\affiliation{School of Mathematical Sciences and Centre for the Mathematics and Theoretical Physics of Quantum Non-Equilibrium Systems, University of Nottingham, University Park, Nottingham NG7 2RD, United Kingdom}

\date{\today}

\begin{abstract}
Characterizing genuine quantum resources and determining operational rules for their manipulation are crucial steps to appraise possibilities and limitations of quantum technologies. Two such key resources are nonclassicality, manifested as quantum superposition between reference states of a single system, and entanglement, capturing quantum correlations among two or more subsystems. Here we present a general formalism for the conversion of nonclassicality into multipartite entanglement, showing that a faithful reversible transformation between the two resources is always possible within a precise resource-theoretic framework. Specializing to quantum coherence between the levels of a quantum system as an instance of nonclassicality, we introduce explicit protocols for such a mapping. We further show that the conversion relates multilevel coherence and multipartite entanglement not only qualitatively, but also quantitatively, restricting the amount of entanglement achievable in the process and in particular yielding an equality between the two resources when quantified by fidelity-based geometric measures.
\end{abstract}

\maketitle

\section{Introduction}
\enlargethispage{\baselineskip}

Signature features of the quantum world have been recently recognized as {\it resources} that can be harnessed for disruptive technologies \cite{dowling_2003}.
One such resource, embodying the {\it nonclassicality} of quantum mechanics, is the possibility for a quantum system to exist in a {\em superposition} of ``classical'' states. The latter are usually determined based on physical considerations; for instance, in continuous-variable systems they can be identified with the  Glauber-Sudarshan coherent states \cite{glauber_1963,sudarshan_1963}, while in discrete-variable systems they can be taken to form a reference orthonormal basis (e.g.~the energy eigenbasis), so that superposition manifests as quantum {\em coherence} \cite{aberg_2006,witt_2013,levi_2014,baumgratz_2014,bromley_2014,winter_2016,napoli_2016,chitambar_2016,streltsov_2016}.

Superposition underlies other nonclassical phenomena such as quantum correlations among parts of a quantum system \cite{horodecki_2009,adesso_2016}. In particular, {\it entanglement} is itself a key resource and a characteristic trait of quantum mechanics, and stems from the superposition principle in conjunction with the tensor product structure associated to composite systems. Despite the common origin, entanglement and superposition can be formalized according to different resource theories: the former being tied to the paradigm of spatially separated laboratories which can only implement local operations and classical communication (\textsc{locc}) for free \cite{horodecki_2009}, while the second specified by the inability to create superpositions of the classical states for free \cite{aberg_2006,baumgratz_2014,killoran_2016,streltsov_2016,theurer_2017}. Consequently, these two resources, like two  currencies, enjoy different uses in quantum technologies. It thus becomes particularly relevant to investigate the connection between these two types of resource beyond a merely conceptual standpoint, and to devise operational schemes that allow the dynamical transformation of one into the other.

Several works have analyzed this problem.  In quantum optics, nonclassicality gets mapped into entanglement by a beam splitter \cite{kim_2002,wang_2002,wolf_2003,asboth_2005,ivan_2011}, while, in the discrete-variable scenario, it is the controlled \textsc{not} (\cnot) gate \cite{streltsov_2011,piani_2011} that plays a similar role. The quantitative interplay between the degree of nonclassicality and the bipartite entanglement obtained from it  has been investigated as well \cite{asboth_2005,sperling_2015,vogel_2014,streltsov_2015,ge_2015,killoran_2016,theurer_2017}. These studies have advanced our understanding of nonclassicality as a resource in systems of arbitrary dimension \cite{brandao_2008,baumgratz_2014,killoran_2016,zhang_2016,streltsov_2016,xu_2016,tan_2017,theurer_2017,mukhopadhyay_2017 }.

In this paper, we investigate the conversion of nonclassicality, expressed as superposition between multiple levels of a quantum system, into {\em multipartite} entanglement. In Sec.~\ref{sec:noncl} we show that there always exists a state-independent unitary mapping, realized by operations which alone cannot create nonclassicality, such that the presence of $k$-level nonclassicality in the state of a single $d$-level system is necessary and sufficient to create $k+1$-partite entanglement between the system and $k$ ancillas.

To exemplify such a conversion procedure, in Sec.~\ref{sec:coherence} we specialize to quantum coherence as an instance of nonclassicality \cite{streltsov_2016}, and introduce an explicit physical protocol which directly converts $k$-level coherence into $k+1$-body multipartite entanglement. The protocol entangles a $d$-level system (qudit) with up to $d$ qubits by a sequential application of generalized \cnot{} gates (free operations in the resource theory of coherence formalized in \cite{baumgratz_2014,winter_2016}). The protocol can be further extended via the decoupling of the qudit system by \textsc{locc} (free operations in the resource theory of entanglement \cite{horodecki_2009}), to provide a mapping of $k$-coherence into multipartite entanglement of the ancillary qubits alone. This process can also be seen as a toy model for decoherence~\cite{zurek_2003} due to the interaction with a many-body environment, with information about the superposition leaking into the environment in the form of multipartite entanglement.

Finally, in Sec.~\ref{sec:quantification} we show that the initial amount of $k$-coherence places a quantitative restriction on the amount of entanglement that can be converted from it. In particular, the fidelity-based geometric measure of $k+1$-partite entanglement \cite{wei_2003,streltsov_2010} at the output of the protocol is exactly equal to the fidelity-based geometric measure of $k$-coherence in the input state of the $d$-level system --- a computable quantifier of multi-level coherence introduced here, extending previous work in \cite{baumgratz_2014,streltsov_2015}.

\section{Nonclassicality conversion}\label{sec:noncl}

Nonclassicality is a notion that depends on the preassigned set of states that are deemed ``classical''. Choosing a finite set of states $\{\ket{\chi_i}\}$ which spans the whole Hilbert space $\H$ to constitute the pure classical states, as dictated by the physics of the problem under investigation, one asks whether a mixed state $\rho$ can be represented as a convex combination of classical states only. If this is not possible --- that is, if one has to consider superpositions of $\{\ket{\chi_i}\}$ --- then $\rho$ is a nonclassical state. In other terms, the set of all classical states $\mathcal{C}$ is formed by the convex hull of $\{\ket{\chi_i}\}$.

For finite-dimensional systems, the notion of nonclassicality is often understood as quantum coherence \cite{aberg_2006,baumgratz_2014,streltsov_2016}.  However, following \cite{killoran_2016,theurer_2017}, we note that the approach presented here is more general, since one does not require the states $\{\ket{\chi_i}\}$ to be orthogonal. This provides a common framework applicable e.g.~to classical sets formed by  $\operatorname{SU}(N)$ Gilmore-Perelomov coherent states \cite{gilmore_1972,perelomov_1972} in discrete-variable systems and Glauber-Sudarshan coherent states \cite{glauber_1963,sudarshan_1963} in continuous-variable systems, which are not orthogonal yet such that any finite subset thereof is linearly independent \cite{vogel_2014}.

The  framework adopted here leads to a natural measure of the level of nonclassicality of a state. For a pure state, one can indeed define the \textit{nonclassical rank} ($\NR$) \cite{sperling_2015,killoran_2016} as
$\mbox{$\NR \left(\ket\psi\right) = \min \left\{ r \;\big|\; \ket\psi = \sum_{i=1}^{r} c_i \ket{\chi_i},\; \ket{\chi_i} \in \mathcal{C} \right\}$}$,
with nonzero complex coefficients $c_i$. This clearly resembles the definition of the Schmidt rank $\SR(\ket\psi)$ of bipartite entangled states \cite{nielsen_2011}, and it can be extended to mixed states in the same way as the the Schmidt rank is extended to the Schmidt number $\SN$ \cite{terhal_2000}.

We thus define the \textit{nonclassical number} ($\NN$) of a mixed state $\rho$ as $\NN \left(\rho\right) = \min_{\{p_i, \ket{\psi_i}\}} \max_{i} \NR\left(\ket{\psi_i}\right)$, where the minimization is performed over all pure-state convex decompositions of $\rho$ into $\rho = \sum_i p_i \ket{\psi_i}\bra{\psi_i}$. In other words, in every such decomposition at least one state has nonclassical rank $\NR(\ket{\psi_i}) \geq \NN(\rho)$, and there exists a decomposition where all pure states have nonclassical rank $\NR(\ket{\psi_i}) \leq \NN(\rho)$.

Killoran et al.~\cite{killoran_2016} showed that there always exists an isometry, consisting of adding an ancilla and applying a global unitary, which maps each pure state of nonclassical rank $k$ into a bipartite entangled pure state of Schmidt rank $k$. In fact, as we show below, this result can be straightforwardly extended to the general case of mixed states:
\begin{proposition}\label{thm:map_to_schmidt}
Let $\H$ be a $d$-dimensional Hilbert space, $\mathcal{D}(\H)$ the corresponding set of density operators, and $\Hanc\cong\H$ the Hilbert space of an ancillary system. Then if the classical pure states $\{\ket{\chi_i}\}_{i=1}^d$ form a linearly independent set spanning $\H$, there exists an isometry $\Gamma: \H \rightarrow \H \otimes \Hanc$ such that for any state $\rho \in \mathcal{D}(\H)$ we have $\NN(\rho) = \SN(\Gamma\rho\Gamma^\dagger)$.
\end{proposition}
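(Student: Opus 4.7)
The plan is to invoke the pure-state construction of Killoran et al.\ (cited just above the proposition), which yields an isometry $\Gamma$ with $\SR(\Gamma\ket{\psi}) = \NR(\ket{\psi})$ for every pure $\ket{\psi}\in\H$, and then to lift this equality to the mixed-state quantifiers $\NN$ and $\SN$ using only linearity and injectivity of $\Gamma$. I would establish the two inequalities separately, and make no further use of the specific form of $\Gamma$.

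For $\SN(\Gamma\rho\Gamma^\dagger) \leq \NN(\rho)$, I would fix a decomposition $\rho = \sum_i p_i \ket{\psi_i}\bra{\psi_i}$ achieving $\max_i \NR(\ket{\psi_i}) = \NN(\rho)$. Linearity gives the pure-state decomposition $\Gamma\rho\Gamma^\dagger = \sum_i p_i (\Gamma\ket{\psi_i})(\bra{\psi_i}\Gamma^\dagger)$, and applying the pure-state identity $\SR(\Gamma\ket{\psi_i}) = \NR(\ket{\psi_i})$ termwise, together with the definition of $\SN$, yields the bound immediately.

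For the reverse inequality the key observation is that the support of $\sigma \equiv \Gamma\rho\Gamma^\dagger$ is contained in the image $\Gamma(\H)$, so in any pure-state decomposition $\sigma = \sum_i q_i \ket{\phi_i}\bra{\phi_i}$ every $\ket{\phi_i}$ already lies in $\Gamma(\H)$. Setting $\ket{\psi_i} := \Gamma^\dagger \ket{\phi_i}$ produces unit vectors because $\Gamma\Gamma^\dagger$ acts as the identity on its range, so $\Gamma\ket{\psi_i} = \ket{\phi_i}$ and $\|\ket{\psi_i}\|=\|\ket{\phi_i}\|=1$. Conjugating the decomposition by $\Gamma^\dagger$ on the left and $\Gamma$ on the right then recovers $\rho = \Gamma^\dagger\sigma\Gamma = \sum_i q_i \ket{\psi_i}\bra{\psi_i}$ as a bona fide pure-state decomposition. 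The pure-state equality gives $\NR(\ket{\psi_i}) = \SR(\ket{\phi_i})$ for each $i$, hence $\NN(\rho) \leq \max_i \SR(\ket{\phi_i})$ for every decomposition of $\sigma$; minimizing over such decompositions concludes $\NN(\rho) \leq \SN(\sigma)$.

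The main subtlety to handle carefully is the support-restriction step in the lower bound: one must confirm both the standard fact that every pure state appearing with nonzero weight in a decomposition of a mixed state lies in that state's range, and the trivial inclusion $\mathrm{range}(\Gamma\rho\Gamma^\dagger) \subseteq \Gamma(\H)$. Together these yield the bijection between pure-state decompositions of $\rho$ and pure-state decompositions of $\sigma$ mediated by $\Gamma$, which is precisely what allows the pure-state Killoran et al.\ result to transfer to the mixed-state quantifiers without any additional optimization or construction.
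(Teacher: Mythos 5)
Your proposal is correct and follows essentially the same route as the paper's proof: both directions are obtained by transporting pure-state decompositions forward and backward through the isometry, using the Killoran et al.\ pure-state equality termwise, the fact that decomposition vectors lie in the support of the mixed state, and the invertibility of the isometry on its image. The only difference is cosmetic --- you spell out the normalization check $\|\Gamma^\dagger\ket{\phi_i}\|=1$ explicitly, which the paper leaves implicit.
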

\begin{proof}To begin, let us note the fact that the set of all possible pure states belonging to pure-state decompositions of $\rho \in \mathcal{D}(\H)$ is given precisely by the set of pure states in the support of $\rho$ \cite{schrodinger_1936,hughston_1993}. By the result of \cite{killoran_2016}, we have that there exists a unitary $U$ such that $\NR(\ket\psi) = \SR\left(U(\ket\psi\otimes\ket\psianc)\right)\;\forall \ket\psi\in\H$ where $\ket\psianc \in \Hanc$ is a fixed reference state for the ancilla system. The isometry $W$ is given by attaching the ancilla state $\ket\psianc$ composed with the action of the unitary $U$. Following \cite{asboth_2005,piani_2012}, we note that there is a one-to-one correspondence between the pure-state decompositions of $\rho$ and the decompositions of $\rho' = W\rho W^\dagger$, given exactly by the action of $W$. Notice in particular that $W$ can be inverted on its image, and that any $\ket{\psi'}$ in the support of $\rho'$ has as pre-image $W^\dagger \ket{\psi'}$ in the support of $\rho$, which means that by the properties of $W$ one has $\NR(W^\dagger \ket{\psi'}) = \SR\left(\ket{\psi'}\right)$. Assuming $\rho$ has $\NN(\rho) = k$, then it is possible to find a pure-state decomposition of it which only contains states with non-classicality rank less or equal to $k$. The pure states in one such decomposition will then be transformed by the action of $W$ into entangled states of Schmidt rank at most $k$, which will form a pure-state decomposition of $\rho'$. This proves that $\SN(\rho')\leq \NN(\rho)$. On the other hand, suppose $\SN(\rho') = l$; then it is possible to find a pure-state decomposition of $\rho'$ such that it only contains states with Schmidt rank less or equal to $l$. Under the action of $W^\dagger$, such a decomposition gives rise to a pure-state decomposition of  $\rho$ whose elements have non-classicality rank at most $l$. This proves that $\NN(\rho)\leq \SN(\rho')$.
\end{proof}


 In this paper, we show that an analogous faithful conversion of multilevel nonclassicality into genuine multipartite entanglement is always possible. Following \cite{guhne_2005}, we define a pure state $\ket\psi$ to be $k$-producible if it can be written as $\ket\psi = \ket{\psi_1} \otimes \ldots \otimes \ket{\psi_m}$ with each $\ket{\psi_j}$ pertaining to at most $k$ parties, and a mixed state $\rho$ to be $k$-producible if it can be written as a convex combination of $k$-producible pure states. We call a state $\rho$ genuinely $k$-partite entangled if it is $k$-producible but not $k-1$-producible; equivalently, under such conditions we say that $\rho$ has {\it entanglement depth} $\ed(\rho) = k$ \cite{sorensenmolmer}. A $1$-producible state $\rho$ has $\ed(\rho)=1$ and is fully separable.

\begin{theorem}\label{thm:map_to_k1}
Let $\H$ be a $d$-dimensional Hilbert space, and $\Hanc$ the Hilbert space of an ancillary system. Then if the classical pure states $\{\ket {\chi_i}\}_{i=1}^{d} \in \H$ form a linearly independent set spanning $\H$, there exists an isometry $\Lambda: \H \to \H \otimes \Hanc^{\otimes d}$ such that for any state $\rho \in \mathcal{D}(\H)$ with nonclassical number $\NN(\rho) = k$, $\Lambda\rho\Lambda^\dagger$ is genuinely $k+1$-partite entangled iff $\rho$ is nonclassical ($2\leq k \leq d$) and $\Lambda\rho\Lambda^\dagger$ is fully separable iff $\rho$ is classical ($k=1$).\end{theorem}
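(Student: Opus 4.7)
The plan is to parallel the two-step structure of the proof of Proposition~\ref{thm:map_to_schmidt}: first establish the conversion on pure states by exhibiting an explicit isometry $\Lambda$, then extend to arbitrary mixed states using the one-to-one correspondence of pure-state decompositions that $\Lambda$ induces between $\rho$ and $\Lambda\rho\Lambda^\dagger$. The essential feature to engineer in $\Lambda$ is that each classical state $\ket{\chi_i}$ is sent to a fully product state across the $d+1$ parties, while any superposition of $k$ classical states acquires entanglement depth exactly $k+1$.

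Concretely, I would define $\Lambda$ by prescribing $\Lambda\ket{\chi_i} = \ket{\tilde\chi_i}$ for a family of product pure states $\{\ket{\tilde\chi_i}\}_{i=1}^d \subset \H\otimes\Hanc^{\otimes d}$ whose Gram matrix reproduces that of $\{\ket{\chi_i}\}$; linear independence then guarantees that the linear extension is a well-defined isometry on $\H$. In the orthonormal (coherence-basis) case treated in Sec.~\ref{sec:coherence}, the canonical choice is $\ket{\tilde\chi_i} = \ket{\chi_i}\otimes\ket{f_i}$, with $\ket{f_i}$ the product state on the $d$ ancillary sites carrying an orthogonal ``excitation'' at site $i$ and a fixed reference state elsewhere---obtained by the sequential generalized CNOT protocol of Sec.~\ref{sec:coherence}. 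For generic linearly independent $\{\ket{\chi_i}\}$, the matching product vectors can be constructed through a Hadamard-product decomposition of the Gram matrix into $d+1$ unit-diagonal positive semidefinite factors, each realizable as a single-site Gram. The pure-state depth-conversion claim then follows: for $\ket\psi = \sum_{i\in S} c_i \ket{\chi_i}$ with $|S| = k = \NR(\ket\psi)$, the image $\Lambda\ket\psi = \sum_{i\in S} c_i \ket{\tilde\chi_i}$ has entanglement depth exactly $k+1$, as seen by a direct analysis of reduced density matrices across every bipartition---in the position-state construction, cross-terms in the ancilla factor vanish under partial trace over any subsystem that resolves the excitation position, so every bipartition of the $k+1$ ``active'' parties yields a mixed reduced state, while the $d-k$ ``inactive'' ancillas remain in the reference state and factor out as a product. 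For $k=1$ the image is already product across all $d+1$ parties.

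The mixed-state extension then follows almost verbatim from the Proposition~\ref{thm:map_to_schmidt} argument: since $\Lambda$ is an isometry, it induces a one-to-one correspondence between pure states in the support of $\rho$ and those in the support of $\Lambda\rho\Lambda^\dagger$, and hence between their pure-state decompositions. For $\NN(\rho)=k\geq 2$, an optimal decomposition $\rho=\sum_i p_i \ket{\psi_i}\bra{\psi_i}$ with $\NR(\ket{\psi_i})\leq k$ maps under $\Lambda$ to a decomposition of $\Lambda\rho\Lambda^\dagger$ into pure states of entanglement depth $\leq k+1$, establishing $(k+1)$-producibility; conversely, any hypothetical $k$-producible decomposition of $\Lambda\rho\Lambda^\dagger$ pulls back to a decomposition of $\rho$ with $\NR\leq k-1$, contradicting $\NN(\rho)=k$, so the depth is exactly $k+1$. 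The $k=1$ case runs analogously, identifying classical decompositions of $\rho$ with fully separable decompositions of $\Lambda\rho\Lambda^\dagger$. The principal obstacle is the first step: the obvious mapping $\ket{\chi_i}\mapsto\ket{\chi_i}\otimes\ket{f_i}$ fails isometricity as soon as the $\ket{\chi_i}$ have nontrivial overlaps, and the construction of $d$ product vectors in $\H\otimes\Hanc^{\otimes d}$ that simultaneously match the prescribed Gram matrix and are sufficiently ``spread'' across ancilla sites to yield genuine $(k+1)$-partite entanglement in superpositions is the main technical challenge; once that is handled, the depth computation for pure states and the mixed-state lift are routine.
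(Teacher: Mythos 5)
Your overall strategy is exactly the paper's: send each $\ket{\chi_i}$ to a product vector $\ket{a_i}\otimes\ket{b_i}$ with $\ket{b_i}$ carrying an ``excitation'' at ancilla site $i$, match Gram matrices so that a unitary (hence an isometry $\Lambda$) exists, read off the entanglement depth from the mixedness of the marginals of the active ancilla qubits, and lift to mixed states via the one-to-one correspondence of pure-state decompositions. Your depth computation and mixed-state argument are correct and coincide with the paper's (the paper additionally inserts an invertible local filter $S\otimes L^{\otimes d}$ with $S\ket{a_i}=\ket{i}$ and $L\ket{\lambda}=\ket{1}$ to reduce to the orthonormal case before tracing, which is what makes the ``cross-terms vanish'' step you invoke literally true).

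However, the one step you explicitly defer --- the existence of $d$ product vectors that simultaneously reproduce $G^{(\chi)}$ and are spread across the ancilla sites --- is the entire content of the theorem, and ``a Hadamard-product decomposition of the Gram matrix into $d+1$ unit-diagonal positive semidefinite factors'' does not by itself deliver it: such decompositions always exist trivially (take every ancilla factor to be the all-ones matrix), but then the ancilla states are identical and no entanglement is produced; conversely, an arbitrary unit-diagonal PSD factor need not be realizable as the Gram of single-site states with the position structure your depth argument relies on. The paper closes this gap by going in the opposite direction: it \emph{fixes} the ancilla factors first, choosing $\ket{b_i}=\ket{0}^{\otimes i-1}\ket{\lambda}\ket{0}^{\otimes d-i}$ with $\braket{0|\lambda}=\sqrt{\lambda}$, so that $G^{(b)}=B(\lambda)$ (ones on the diagonal, $\lambda$ off it), and then obtains the required qudit factor by a continuity argument: $M(\epsilon)=G^{(\chi)}\circ B(1+\epsilon)$ has unit diagonal and tends to $G^{(\chi)}>0$ as $\epsilon\to 0$ --- strict positivity being precisely where linear independence enters --- so for sufficiently small $\epsilon>0$ it is positive definite and hence the Gram matrix of linearly independent qudit states $\{\ket{a_i}\}$; setting $\lambda=(1+\epsilon)^{-1}$ gives $G^{(\chi)}=G^{(a)}\circ G^{(b)}$, and the unitary identifying the two Gram-equal families defines $\Lambda$. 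The linear independence of the $\ket{a_i}$ obtained this way is also what makes the filter $S$ invertible, which your ``routine'' depth computation implicitly requires. With this perturbative argument supplied your proof is complete; without it, the central existence claim remains an unproved assertion.
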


\begin{proof}We adapt the methods of Ref. \cite{killoran_2016} to show the existence of this mapping. Let us consider the case of pure states first. Define
\begin{equation}\ket{c_i} = \ket{\chi_i} \otimes \ket\psianc\in\H\otimes\Hanc^{\otimes d}\end{equation}
with $\ket\psianc \in \Hanc^{\otimes d}$ a fixed (fully unentangled) reference state of the ancilla systems. Define $\{\ket{b_i}\}_{i=1}^{d} \in \Hanc^{\otimes d}$ as
\begin{equation}\ket{b_i} = \ket{0}^{\otimes i-1} \otimes \ket\lambda \otimes \ket{0}^{\otimes d-i} \end{equation}
with $\ket\lambda = \sqrt{\lambda} \ket{0} + \sqrt{1-\lambda} \ket{1}$, where $\ket{0},\ket{1}\in \Hanc$ are orthonormal, and $0 \leq \lambda \leq 1$. Recall that the Gram matrix of a set of states $\{\ket{\phi_i}\}$ is defined as $[G^{(\phi)}]_{ij} = \braket{\phi_i | \phi_j}$, and has full rank iff the $\ket{\phi_i}'s$ are linearly independent ~\cite{jozsa_2000}. Define a $\mu$-dependent matrix $B(\mu)$ such that $[B(\mu)]_{ij} = 1$ if $i=j$, $\mu$ if $i\neq j$. We have $G^{(b)}=B(\lambda)$. Define $M(\epsilon) = G^{(c)} \circ B(1+\epsilon)$ where $\circ$ is the Hadamard (that is, entrywise) product. Since $\lim_{\epsilon\to 0}M(\epsilon) = G^{(c)} > 0$ and $\operatorname{diag}(M(\epsilon))=(1,1,\ldots,1)$, it follows that, for sufficiently small $\epsilon>0$, $M(\epsilon)$ is the Gram matrix $G^{(a)}$ of a  set of linearly independent states $\{\ket{a_i}\}_{i=1}^{d}$. We then have $G^{(c)} = G^{(a)} \circ B\left(\frac{1}{1+\epsilon}\right) = G^{(a)}\circ G^{(b)}$, for $\lambda = (1+\epsilon)^{-1}$, which means that the sets of states $\{\ket{c_i}\}$ and $\{\ket{a_i}\otimes\ket{b_i}\}$ have equal Gram matrices, and so there exists a unitary $U$ such that $U \ket{c_i} = \ket{a_i}\otimes\ket{b_i}\,\forall\, i$ \cite{jozsa_2000,chefles_2004}. The isometry $V$ is defined by the composition of attaching the ancilla state $\ket\psianc$ followed by the action of $U$.

Now consider a general pure qudit state $\ket{\psi} =  \sum_{i=1}^{d} \psi_i \ket{\chi_i} $. Then,
\begin{equation}\begin{aligned}\ket{\psi'} &= V\ket\psi = \sum_{i=1}^{d} \psi_i \ket{a_i} \ket{b_i}\\
&= \sum_{i=1}^{d} \psi_i \ket{a_i} \ket{0}^{\otimes i-1} \ket\lambda \ket{0}^{\otimes d-i}.\end{aligned}\end{equation}
It is convenient to use the fact that the entanglement depth of $\ket{\psi'}$ is not affected by a local filter $S\otimes L^{\otimes d}$, with $S$ a qudit operator such that $S\ket{a_i} = \ket{i}$, and $L$ a qubit operator such that $L\ket{0} = \ket{0}$, $L\ket{\lambda} = \ket{1}$. Thus, we can study the entanglement depth of the state
\begin{equation}\ket{\widetilde{\psi}'}\propto \sum_{i=1}^{d} \psi_i \ket{i} \ket{0}^{\otimes i-1} \ket{1} \ket{0}^{\otimes d-i} = \sum_{i=1}^{d} \psi_i \ket{i} \ket{\underline{2^{d-i}}},\end{equation}
where $\ket{\underline{2^{d-i}}}$ is the string of qubits corresponding to $2^{d-i}$ in binary (padded with zeros from the left as needed), e.g. $\ket{\underline{2^3}} = \ket{00\cdots 01000}$ since $2^3 = 1000_2$.
It is evident that $\ket{\widetilde\psi'}$ is fully product iff there is only one term in the superposition, that is $\NR(\ket\psi) = 1 $ iff $\ed(\ket{\psi'}) = 1$. In the following we will consider $\NR(\ket\psi)\geq 2$, and in this case we will prove that $\ed(\ket{\psi'}) = \NR(\ket\psi) +1 $.

To show that $\NR(\ket\psi) = k \geq 2$ implies $\ed(\ket{\widetilde\psi'}) = k +1$, assume w.l.o.g.~that the  $k$ nonzero coefficients $\psi_i$ are the first ones. Then $\ket{\widetilde{\psi}'} \propto \big(\sum_{i=1}^{k} \psi_i \ket{i} \ket{0}^{\otimes i-1} \ket{1} \ket{0}^{\otimes k-i}\big)\ket{0}^{\otimes d-k}$, and the claim follows by showing that $\sum_{i=1}^{k} \psi_i \ket{i} \ket{0}^{\otimes i-1} \ket{1} \ket{0}^{\otimes k-i}$ cannot be factorized in any non-trivial way. This holds, as the reduced state of the $k$ ancillary qubits is proportional to $\sum_{i=1}^k |\psi_i|^2 \proj{0}^{\otimes i-1} \otimes \proj{1} \otimes \proj{0}^{\otimes k-i}$, so the marginal state of any subset of these $k$ qubits is evidently mixed.
On the other hand, to prove that  $\ed(\ket{\psi'}) = \ed(V\ket{\psi})= k +1 \geq 3$ implies $\NR(\ket\psi) = k$, notice that the isometry is invertible on its image. Thus, $\ket\psi = V^\dagger \ket{\psi'}$; since we have just proven that $\ed(V\ket{\psi}) = \NR(\ket\psi)+1$, we arrive at the claim for pure states.

The mixed-state case follows \cite{asboth_2005,piani_2012} by noting that there is a one-to-one correspondence between the pure state decompositions $\{p_i,\ket{\psi_i}\}$ of $\rho$ and $\{p_i,\ket{\psi_i'}\}$ of $\rho' = V \rho V^\dagger$, with each input-output pair of states respecting the relation just discussed: either $\NR(\ket{\psi_i}) = \ed(\ket{\psi_i'}) = 1$  or $\NR(\ket{\psi_i}) +1 = \ed(\ket{\psi_i'})$. Thus, with the exception of the first (trivial) case,
 we have $\NR(\rho) +1 = \ed(\rho')$. Indeed, the pure-state mapping with the above properties, together with the definitions of nonclassical number and entanglement depth, entail that  $\NN(\rho)= m $ implies $\ed(\rho') \leq m + 1 $, and, in turn, $\ed(\rho') =  l $ implies $\NN(\rho') \leq l - 1$. These relations can only be satisfied for $l = m+1$.
\end{proof}

Theorem \ref{thm:map_to_k1} shows that there always exists an isometry which faithfully converts the $k$-level nonclassicality of a quantum system into multipartite entanglement with $k$ other ancillary systems. We note that the specifics of the mappings are not fixed by the theorem, and one could always devise other ways to convert the nonclassicality into entanglement. In particular, the mappings presented in the proofs only use two levels of each ancillary system, resulting in entanglement akin to that of W states \cite{dur_2000}. One may consider other kinds of operations which create qualitatively different multipartite entanglement  --- for instance, one can instead attach a number $d$ of $d+1$-dimensional ancilla systems and choose $\{\ket{b_j}\}_{j=1}^{d}$ such that $\ket{b_j} = \ket{\lambda_j}^{\otimes d}$ and $\ket{\lambda_{j}} = \sqrt{\lambda^{1/d}} \ket{0} + \sqrt{1-\lambda^{1/d}} \ket{j}$ where $\{\ket{j}\}$ is now an orthonormal basis for $\Hanc$. Following a similar argument to Thm.~\ref{thm:map_to_k1}, this will then introduce a generalized GHZ-type entanglement between the qudits, entangling as many levels of the systems as the nonclassical rank of the original state. However, the choice of a W-type mapping in the theorems makes the conversion quite appealing in practice, as it only requires qubit ancillas, and, as we show below, enables one to create entanglement by a sequential application of two-body gates on the nonclassical system and each ancilla.

\section{Coherence conversion protocol}\label{sec:coherence}

We will now specialize to the framework of quantum coherence \cite{aberg_2006,baumgratz_2014,winter_2016,streltsov_2016}. Here, the classical states $\{\ket i\}_{i=1}^{d}$ are taken to form a fixed orthonormal basis for $\H$. Analogously to nonclassicality, we can then define a hierarchy of coherence levels by considering the \textit{coherence rank} $\CR(\ket\psi)$, defined to be the number of nonzero coefficients $c_i$ that a state $\ket\psi = \sum_i c_i \ket{i}$ has in this basis \cite{witt_2013,levi_2014}. We then define the \textit{coherence number}
\begin{equation}\CN(\rho)=\min_{\{p_i, \ket{\psi_i}\}} \max_{i}\CR\left(\ket{\psi_i}\right)\end{equation}
for a mixed state $\rho$ accordingly. We will refer to states with coherence number $k$ as {\it $k$-coherent states}. Clearly, 1-coherence corresponds to classicality, $k$-coherence for any $k\geq 2$ stands as a fine graining of the usual notion of coherence, and $d$-coherence is the maximal coherence level of a $d$-level system.

The $k$-coherence of a single qudit  can be converted into multipartite entanglement in  different physical ways. To show this, we design a protocol to convert $k$-coherence into $k+1$-partite entanglement between the qudit and $k$ qubit ancillas (following Thm.~\ref{thm:map_to_k1}), realizable by a sequential application of $\cnot{}$ gates (see Fig.~\ref{fig:protocol}). We then provide a natural mapping of $k$-coherence into $k$-body entanglement, which can be accomplished by a second step which disentangles the qudit system --- either by unitary transformations as in Fig.~\ref{fig:protocol}(a), or by one-way \textsc{locc}
as in Fig.~\ref{fig:protocol}(b). The latter scheme reflects an operational  scenario in which input agents are constrained to the resource theory of $k$-coherence, having at disposal only incoherent ancillas and incoherent operations as used in the first step, while output agents are constrained to the resource theory of entanglement, being bound to use \textsc{locc} as in the second step.

We illustrate the scheme for pure states, noting that it extends straightforwardly to mixed states. Let $\ket{\Psi} = \ket{\psi^d} \otimes \ket{0}^{\otimes d}$ be the state of the composite system consisting of the qudit initialized in $\ket{\psi^d}$ and $d$ ancilla qubits in a reference pure state $\ket{0}^{\otimes d}$. Consider a unitary activation operation $U_A$ which consists of a sequence of generalized $\cnot{}$ gates $(\mathbbm{1}_d-\proj{i})\otimes \mathbbm{1}_2 + \proj{i} \otimes \sigma_x$, with $\sigma_x$ the Pauli $x$ matrix, between the qudit and the $i$-th ancillary qubit. Explicitly, the sequence realizes the unitary
\begin{equation}U_A = \sum_{i=1}^{d} \ket{i}\bra{i} \otimes \mathbbm{1}_{2}^{\otimes i-1} \otimes \sigma_x \otimes \mathbbm{1}_{2}^{\otimes d-i},\end{equation}
which transforms the state $\ket{\Psi} = \sum_{i=1}^{d} c_i \ket{i} \ket{0}^{\otimes d}$ into $\ket{\Psi'} = U_A \ket{\Psi}=\sum_{i=1}^{d} c_i \ket{i} \ket{\underline{2^{d-i}}}$.

To complete the protocol by mapping into $k$-partite entanglement among the qubit ancillas only, we now give two alternative approaches. Both methods begin by performing a quantum Fourier transform (\textsc{qft}) $\ket{j} \rightarrow \frac{1}{\sqrt{d}} \sum_m e^{\,2\pi i j k / d} \ket{m}$ on the qudit only. Then, in the first approach [Fig.~\ref{fig:protocol}(a)], we can apply a unitary
\begin{equation}U_D = \sum_{j,m=1}^{d} e^{-2\pi i j m / d} \ket{m}\bra{m} \otimes \ket{\underline{2^{d-j}}}\bra{\underline{2^{d-j}}}\end{equation}
to effectively decouple the qudit and the ancilla qubits. This can be understood as the sequential application of $d^2$ controlled local operations $(\mathbbm{1}_d-\proj{m})\otimes \mathbbm{1}_2 + \proj{m} \otimes U^{(m)}_{D_j}$, with control on the qudit and
\begin{equation}U^{(m)}_{D_j} = \ket{0}\bra{0} + e^{-2 \pi i j m / d}\proj{1}\end{equation}
acting on the $j$-th ancillary qubit. After the action of the \textsc{qft} and $U_D$, which jointly define the global unitary $U_B$, the output will be the product state $U_B \ket{\Psi'}=\ket{\Phi^{+}}\ket{\Psi''}$, where $\ket{\Phi^+} = \sum_i \frac{1}{\sqrt{d}} \ket{i}$ is the maximally coherent state of the qudit, and $\ket{\Psi''} = \sum_{i=1}^{d} c_i \ket{\underline{2^{d-i}}}$ is a $k$-partite entangled state of the qubit ancillas.

An alternative approach [Fig.~\ref{fig:protocol}(b)], which might lend itself to a more efficient implementation as it does not require global interactions, is to realize the decoupling of the qudit by an operation $\Delta$ consisting of one-way \textsc{locc} (see e.g.~Ref.~\cite{piani_2012}). After performing the \textsc{qft}, one can measure the qudit in the $\{\ket m\}$ basis and, depending on the measurement result $m$, apply the local unitary $U_{D}^{(m)} = \sum_{j=1}^{d} e^{-2\pi i j m / d} \ket{\underline{2^{d-j}}}\bra{\underline{2^{d-j}}} = \bigotimes_j U^{(m)}_{D_j}$ to the remaining $d$ qubits. We then obtain the final state $\Delta\left(\ket{\Psi'}\right) =\ket{\Psi''}$, which is exactly the same as the state of the qubits after the  unitary transformation $U_B$ from the previous approach.

We can formalize the properties of the protocol as follows, casting the result in terms of mixed states in general.

\begin{figure*}[t]
\vspace*{-10pt}
\begin{minipage}[b]{7.2cm}
\includegraphics[width=7.2cm]{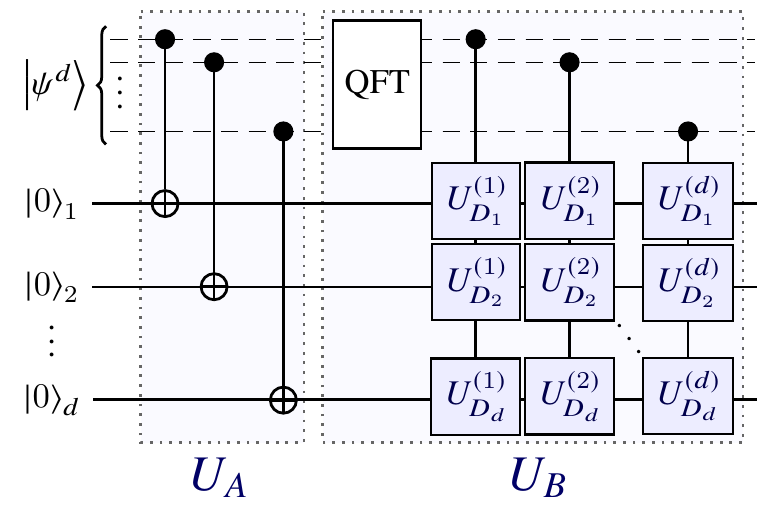} \\
{\small{(a)}}
\end{minipage}%
\hspace*{1cm}%
\begin{minipage}[b]{6.3cm}
\includegraphics[width=6.3cm]{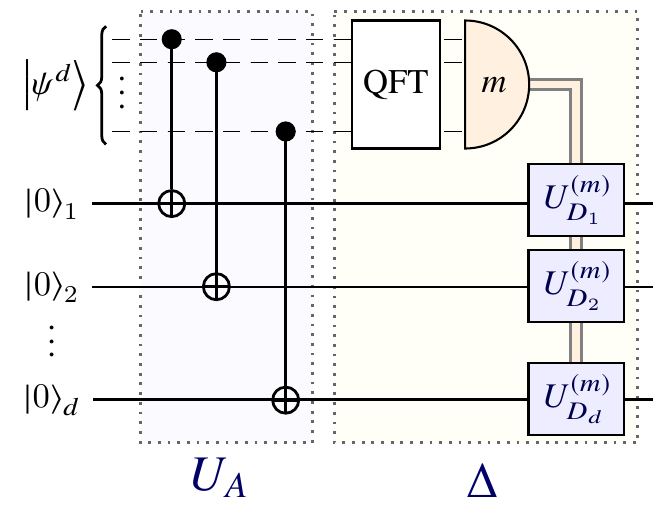} \\
{\small{(b)}}
\end{minipage}
\caption{Schemes of two protocols to convert $k$-coherence into multipartite entanglement. Both protocols begin with the global unitary operation $U_A$ which sequentially entangles each level of the qudit system in the state $\ket{\psi^d}$ with a corresponding ancillary qubit by generalized \cnot{} gates, resulting in a $k+1$-partite entangled state. One can then decouple the qudit system either (a) by a unitary transformation $U_B$, consisting of a Fourier transform and a disentangling unitary $U_D$, or (b) via a one-way \textsc{locc} operation $\Delta$. Both protocols result in genuine $k$-partite entanglement between the ancillary qubits.}
\label{fig:protocol}
\vspace*{-1pt}
\end{figure*}

\begin{theorem}
\label{thm:coherence_pure}
Given the above  conversion protocol, consisting of the activation unitary $U_A$ and the decoupling operation (either via $U_B$ or $\Delta$), with $\rho'=U_A (\rho\otimes \proj{0}^{\otimes d})U_A^\dagger$ and $\rho'' = \Delta(\rho')$, the following statements are equivalent for $2 \leq k \leq d$:
\renewcommand{\labelenumi}{(\roman{enumi})}
\begin{enumerate}
\item $\CN(\rho)=k$;
\item $\ed(\rho')=k+1$;
\item $\ed(\rho'') = k$.
\end{enumerate}
\end{theorem}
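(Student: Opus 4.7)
The plan is to establish the three-way equivalence by showing (i) $\Leftrightarrow$ (ii) essentially as a direct corollary of Theorem~\ref{thm:map_to_k1}, and then (i) $\Leftrightarrow$ (iii) by recognising that the full protocol ($U_A$ followed by the decoupling operation) realises an isometric embedding of the qudit into a $d$-dimensional subspace of $\Hanc^{\otimes d}$; the mixed-state extension is then handled by the Schr\"odinger--HJW bijection between pure-state decompositions.

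For (i) $\Leftrightarrow$ (ii), I would specialise Theorem~\ref{thm:map_to_k1} to the coherence basis (so that $\NR=\CR$ and $\NN=\CN$): the output $\ket{\Psi'}=U_A(\ket{\psi}\otimes\ket{0}^{\otimes d})=\sum_i c_i\ket{i}\ket{\underline{2^{d-i}}}$ is structurally the state $\ket{\widetilde{\psi}'}$ analysed in that proof, so the pure-state argument there, combined with the decomposition bijection induced by the isometry $U_A(\cdot\otimes\ket{0}^{\otimes d})$, immediately yields $\CN(\rho)=k\iff \ed(\rho')=k+1$ for $k\geq 2$.

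For (i) $\Leftrightarrow$ (iii), I would first check by direct computation that both decoupling variants produce the same $\rho''$: for the unitary $U_B$ this follows from the factorisation $U_B\rho' U_B^\dagger=\proj{\Phi^+}\otimes\rho''$ and partial trace over the qudit, while for $\Delta$ it follows by verifying that the conditional corrections $U_D^{(m)}$ precisely cancel the outcome-dependent phases introduced by the \textsc{qft}-and-measure step. Consequently, on pure states the map $\rho\mapsto\rho''$ is the linear extension of $\ket{i}\mapsto\ket{\underline{2^{d-i}}}$, i.e.\ an isometric embedding of $\H$ into $\operatorname{span}\{\ket{\underline{2^{d-i}}}\}_{i=1}^d\subset\Hanc^{\otimes d}$. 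The Schr\"odinger--HJW theorem then yields a one-to-one correspondence between pure decompositions $\{p_j,\ket{\psi_j}\}$ of $\rho$ and $\{p_j,\ket{\Psi''_j}\}$ of $\rho''$, with matching components related by this isometry. It then remains to show that $\CR(\psi_j)=k_j\geq 2$ implies $\ed(\ket{\Psi''_j})=k_j$: letting $S$ index the nonzero coefficients, $\ket{\Psi''_j}$ factorises as $\ket{0}$ on the $d-k_j$ inactive qubits tensored with a W-like state $\sum_{i\in S}c_i\ket{1}_i\bigotimes_{l\in S\setminus\{i\}}\ket{0}_l$ on the $k_j$ active qubits, and the latter is genuinely $k_j$-partite entangled because tracing out any proper non-empty subset of $S$ yields a mixture of at least two orthogonal terms and hence a mixed marginal.

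Combining the bijection with this pure-state relation gives $\CN(\rho)=\min_{\{p_j,\ket{\psi_j}\}}\max_j\CR(\psi_j)=\min_{\{p_j,\ket{\Psi''_j}\}}\max_j\ed(\ket{\Psi''_j})=\ed(\rho'')$, closing the equivalence. The main subtlety I expect is verifying that $\Delta$, despite being non-unitary, still implements the same deterministic isometry on $\operatorname{supp}(\rho)$ as $U_B$ followed by the partial trace, so that the two notions of $\rho''$ coincide and the decomposition bijection applies uniformly in both cases; once this is in place, the W-state depth calculation and Theorem~\ref{thm:map_to_k1} do the rest.
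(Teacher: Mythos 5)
Your proposal is correct and follows essentially the same route as the paper: (i)$\Leftrightarrow$(ii) as a direct specialization of Theorem~\ref{thm:map_to_k1}, the mixed-marginal (W-state) argument for genuine $k$-partite entanglement of $\ket{\Psi''}$, invertibility of the overall isometry $\ket{i}\mapsto\ket{\underline{2^{d-i}}}$ for the converse, and the Schr\"odinger--HJW decomposition bijection for mixed states. You simply make explicit some steps the paper leaves implicit (agreement of the two decoupling variants, and the min--max identity $\CN(\rho)=\ed(\rho'')$), which is fine.
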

\begin{proof}
The equivalence between (i) and (ii) is a direct application of Thm.~\ref{thm:map_to_k1}. To prove that (i) implies (iii), we first consider pure states, and assume w.l.o.g. that the $k$ nonzero coefficients are the first ones. Then the last $d-k$ qubits in $\ket{\Psi''}$ are in the initial product state $\ket{0}^{\otimes d-k}$, and we need to prove that the first $k$ qubits are genuinely multipartite entangled. This is the case since any non-trivial subset of such qubits is mixed. That (iii) implies (i) can then be proven by observing, as in the proof of Thm.~\ref{thm:map_to_k1}, that the isometry from $\ket{\psi^d}$ to $\ket{\Psi''}$ can be inverted, and using the just proven fact that (i) implies (ii). The extension to mixed states follows the exact same steps as in Thm.~\ref{thm:map_to_k1}.
\end{proof}
We notice that (i) and (iii) are actually also equivalent for $k=1$, which does not hold in the case of (i) and (ii) since classicality does not lead to entanglement creation.

\section{Quantitative relations}\label{sec:quantification}

In any resource theory, one can define a faithful class of quantifiers by considering the distance to the set of non-resource states \cite{bengtsson_2007,adesso_2016,brandao_2015}. In the cases of bipartite entanglement and standard coherence (i.e., $2$-coherence in our framework), the corresponding non-resource sets are the sets of separable states $\mathcal{S}$ and incoherent states $\mathcal{I}$, respectively \cite{vedral_1997,baumgratz_2014}. For the case of $k$-partite entanglement, one can define the non-resource set as the set of $k-1$-producible states $\produc{k-1}$ \cite{guhne_2005}, i.e., states which are at most $k-1$-partite entangled. Similarly for $k$-coherence, we consider the set $\incoh{k-1}$ of states which are at most $k-1$-coherent. We then define the distance-based quantifiers as follows.
\begin{definition}
Given a quasi-metric $D(\rho,\sigma)$ contractive under \textsc{cptp} maps, we define the distance-based measure of $k$-partite entanglement as
\begin{equation}
E_{D}^{(k)}(\rho) = \inf_{\varsigma \in \mathcal{P}^{(k-1)}} D(\rho, \varsigma)
\end{equation}
and the distance-based measure of $k$-coherence as
\begin{equation}
C_{D}^{(k)}(\rho) = \inf_{\sigma \in \incoh{k-1}} D(\rho, \sigma).
\end{equation}
\end{definition}
The distance-based quantifiers of $k$-coherence and $k$-partite entanglement have many useful properties which allow us to relate the two resource quantitatively. In particular, within the distance-based framework, we prove the following relation between the degree of coherence of a state $\rho$ and the multipartite entanglement of the output states $\rho' = U_A (\rho\otimes \proj{0}^{\otimes d})U_A^\dagger$ and $\rho'' = \Delta(\rho')$ obtained from the conversion protocol of Thm.~\ref{thm:coherence_pure}:
\begin{theorem}
\label{thm:distance_based_relations}
Let $D$ be any distance contractive under \textsc{cptp} maps. Then
\begin{align*}&C_D^{(k)}(\rho) \geq E_D^{(k+1)}(\rho'), \\
&C_D^{(k)}(\rho) \geq E_D^{(k)}(\rho'').\end{align*}
\end{theorem}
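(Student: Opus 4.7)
The plan is to exploit the canonical template for bounding one distance-based resource measure by another: if $\Phi$ is a CPTP map that sends the free states of one resource theory into the free states of another, then contractivity of $D$ under $\Phi$ immediately gives that the source resource measure of $\rho$ upper bounds the target resource measure of $\Phi(\rho)$. Both inequalities should fall out of this template once I identify the relevant maps and the relevant set inclusions.

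For the first inequality, let $\Phi_1(\rho) := U_A(\rho \otimes \proj{0}^{\otimes d}) U_A^\dagger$, which is an isometric (hence CPTP) channel. The key step is to verify the inclusion $\Phi_1(\incoh{k-1}) \subseteq \mathcal{P}^{(k)}$: on any pure $\ket{\chi}$ with $\CR(\ket{\chi}) = j \leq k-1$, Theorem~\ref{thm:map_to_k1} yields $\ed(\Phi_1(\proj{\chi})) = j+1 \leq k$ for $j \geq 2$ and $\ed(\Phi_1(\proj{\chi})) = 1$ for $j=1$; passing through an arbitrary pure-state decomposition of $\sigma \in \incoh{k-1}$ and using convexity of $\mathcal{P}^{(k)}$ then gives $\Phi_1(\sigma) \in \mathcal{P}^{(k)}$. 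For any such $\sigma$, contractivity of $D$ yields $D(\rho', \Phi_1(\sigma)) \leq D(\rho, \sigma)$, and since $\Phi_1(\sigma)$ is a feasible point for the infimum defining $E_D^{(k+1)}(\rho')$, taking the infimum over $\sigma \in \incoh{k-1}$ gives $E_D^{(k+1)}(\rho') \leq C_D^{(k)}(\rho)$.

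The second inequality follows identically with the composed CPTP map $\Phi_2 := \Delta \circ \Phi_1$. Invoking Theorem~\ref{thm:coherence_pure}, together with the remark after it that the (i)$\Leftrightarrow$(iii) equivalence also holds at $k=1$, the same reasoning at the pure-state level gives $\ed(\Phi_2(\proj{\chi})) = j$ for $\CR(\ket{\chi}) = j \in \{1,\ldots,k-1\}$, and convexity then yields $\Phi_2(\incoh{k-1}) \subseteq \mathcal{P}^{(k-1)}$. Contractivity of $D$ under $\Phi_2$ then delivers $E_D^{(k)}(\rho'') \leq C_D^{(k)}(\rho)$ by exactly the same chain of inequalities.

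The main obstacle to watch out for is essentially a bookkeeping one: Theorems~\ref{thm:map_to_k1} and~\ref{thm:coherence_pure} are phrased at a fixed value of $\CN$, whereas $\incoh{k-1}$ contains all states with coherence number at most $k-1$, so I must apply those theorems separately at every level $j \in \{1, \ldots, k-1\}$ and combine with convexity of the producibility sets (being careful about the classical $j=1$ boundary case, where the output is fully separable rather than $(j+1)$-partite entangled). Once this set-inclusion step is in hand, the rest is a textbook instantiation of the monotonicity of distance-based measures under free-set-preserving CPTP maps, and I do not anticipate further subtleties.
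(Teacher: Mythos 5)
Your proof is correct and follows essentially the same route as the paper's: both establish that the protocol maps $\incoh{k-1}$ into a subset of the relevant producible set (the paper names these images $\mathcal{B}^{(k)}$ and $\mathcal{R}^{(k-1)}$) and then combine contractivity of $D$ with enlargement of the feasible set in the infimum. The only nit is your phrase ``arbitrary pure-state decomposition'' of $\sigma\in\incoh{k-1}$: you must use a decomposition witnessing $\CN(\sigma)\leq k-1$, i.e.\ one in which every pure state has coherence rank at most $k-1$, since a generic decomposition need not have this property.
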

\begin{proof}Let $\mathcal{B}^{(k)}$ denote the subset of $\mathcal{P}^{(k)}$ spanned by states of the form $\ket{\Psi'} = \sum_{i=1}^{d} c_i \ket{i} \ket{\underline{2^{d-i}}}$ as obtained from the first step of the conversion protocol in Thm.~\ref{thm:coherence_pure}. Similarly, let $\mathcal{R}^{(k-1)}$ denote the subset of $\mathcal{P}^{(k-1)}$ spanned by $\ket{\Psi''} = \sum_{i=1}^{d} c_i \ket{\underline{2^{d-i}}}$ as obtained from the second step of the same protocol.

Let us consider $\rho'$ first. We have that
\begin{equation}
\begin{aligned}
C_{D}^{(k)}(\rho) &= \hspace{-5pt} \inf_{\sigma \in \incoh{k-1}} D(\rho, \sigma) \\
&= \hspace{-5pt}\inf_{\sigma \in \incoh{k-1}} D(U_A \rho \otimes \ket{0}\bra{0}^{\otimes d} U_A^{\dagger}, U_A \sigma \otimes \ket{0}\bra{0}^{\otimes d} U_A^{\dagger})  \\
&= \hspace{-5pt}\inf_{\sigma \in \incoh{k-1}} D(\rho', U_A \sigma \otimes \ket{0}\bra{0}^{\otimes d} U_A^{\dagger})  \\
&= \hspace{-5pt}\inf_{\delta \in \mathcal{B}^{(k)}} D(\rho', \delta) \\
&\geq \hspace{-5pt}\inf_{\varsigma \in \mathcal{P}^{(k)}} D(\rho', \varsigma)  \\
&= E_{D}^{(k+1)}(\rho'),
\end{aligned}
\end{equation}
where we have used Thm.~\ref{thm:coherence_pure},
as well as the facts that $D(\rho \otimes \tau,\sigma \otimes \tau)=D(\rho,\sigma)$ and $D(U\rho U^{\dagger},U \sigma U^{\dagger}) = D(\rho,\sigma)$ for any contractive distance $D$. Equality clearly holds when there exists $\delta \in \mathcal{B}^{(k)}$ such that
\begin{equation}
\inf_{\varsigma \in \mathcal{P}^{(k)}} D(\rho', \varsigma) = D(\rho', \delta).
\end{equation}

An analogous argument holds for $\rho''$, One can either follow the steps above with the unitary transformation $U_B$, or note the contractivity of the distance $D$ under the \textsc{locc} operation $\Delta$ and obtain:
\begin{equation}
\begin{aligned}
C_{D}^{(k)}(\rho) &= \inf_{\sigma \in \incoh{k-1}} D(\rho, \sigma) \\
&= \inf_{\sigma \in \incoh{k-1}} D(\rho', U \sigma \otimes \ket{0}\bra{0}^{\otimes d} U^{\dagger})  \\
&\geq \inf_{\sigma \in \incoh{k-1}} D\left(\Delta(\rho'), \Delta\left(U \sigma \otimes \ket{0}\bra{0}^{\otimes d} U^{\dagger}\right)\right) \\
&= \inf_{\delta \in \mathcal{R}^{(k-1)}} D(\rho'', \delta) \\
&\geq \inf_{\varsigma \in \mathcal{P}^{(k-1)}} D(\rho'', \varsigma)  \\
&= E_{D}^{(k)}(\rho'').
\end{aligned}
\end{equation}
\end{proof}
The amount of $k$-coherence present in the initial state thus places quantitative constraints on the multipartite entanglement one can obtain from it.

We can obtain a particularly interesting family of distance-based quantifiers by setting $D(\rho, \sigma) := 1- F(\rho, \sigma)$, with
\begin{equation}
F(\rho,\sigma)=\Tr\Big(\sqrt{\smash[b]{\sqrt{\rho}\sigma\sqrt{\rho}}\vphantom{\big(}}\Big)^2
\end{equation} being the (squared) fidelity \cite{uhlmann_1976,jozsa_1994}. These quantifiers are related to the family of {\it geometric measures} of $k$-coherence $C_G^{(k)}$ and $k$-partite entanglement $E_G^{(k)}$, which directly generalize their counterparts defined first for entanglement \cite{wei_2003,streltsov_2010} and standard quantum coherence \cite{baumgratz_2014,streltsov_2015}:
\begin{definition}
The geometric measure of $k$-partite entanglement is given by
\begin{equation}
\label{eq:ent_convexroof}\begin{aligned}
E_G^{(k)} (\ket\psi) &= \inf_{\ket\varsigma \in \mathcal{P}^{(k-1)}} \left(1 - F(\ket\varsigma, \ket\psi)\right)\\
E_G^{(k)} (\rho) &= \inf_{\{p_i, \ket{\psi_i}\}}  \sum_i p_i \, E_G^{(k)}(\ket{\psi_i})\\&\text{s.t.}\;\; \sum_i p_i \ket{\psi_i}\bra{\psi_i} = \rho,\; p_i\geq 0 \,\forall i,
\end{aligned}
\end{equation}
and similarly for the geometric measure of $k$-coherence:
\begin{equation}\begin{aligned}
C_G^{(k)} (\ket\psi) &= \inf_{\ket\sigma \in \incoh{k-1}} \left(1 - F(\ket\sigma, \ket\psi)\right)\\
C_G^{(k)} (\rho) &= \inf_{\{p_i, \ket{\psi_i}\}}  \sum_i p_i \,C_G^{(k)}(\ket{\psi_i})\\&\text{s.t.}\;\; \sum_i p_i \ket{\psi_i}\bra{\psi_i} = \rho,\; p_i\geq 0 \,\forall i
\end{aligned}
\end{equation}
where $F(\ket\sigma, \ket\psi) = | \braket{\sigma | \psi} |^2$ is the (squared) fidelity.
\end{definition}
In fact, the geometric measures and the fidelity-based distance quantifiers can be shown to be equal to each other, as we prove below.
\begin{proposition}
\label{prop:geometric_fidelity}
The following relation holds for the geometric measures:
\begin{equation}\begin{aligned}
E_G^{(k)}(\rho) = \inf_{\varsigma \in \mathcal{P}^{(k-1)}} \left(1 - F(\rho, \varsigma)\right)\\
C_G^{(k)}(\rho) = \inf_{\sigma \in \incoh{k-1}} \left(1 - F(\rho, \sigma)\right)
\end{aligned}\end{equation}
where $F(\rho,\sigma) = \left(\Tr \sqrt{\sqrt{\rho}\sigma\sqrt{\rho}} \right)^2$
is the (squared) fidelity.
\end{proposition}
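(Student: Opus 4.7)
The plan is to establish each equality by proving both inequalities separately, adapting the argument that Streltsov, Kampermann, and Bru{\ss}~\cite{streltsov_2010} developed for the geometric measure of bipartite entanglement. Since the argument for $C_G^{(k)}$ is identical to that for $E_G^{(k)}$ upon replacing $\mathcal{P}^{(k-1)}$ by $\incoh{k-1}$, using only the convexity and closedness of these sets (both immediate from their definitions), I will focus on $E_G^{(k)}$. A preliminary observation is that for any pure $\ket\psi$ and any mixed $\varsigma \in \mathcal{P}^{(k-1)}$ with decomposition $\varsigma = \sum_j q_j \ket{\varsigma_j}\bra{\varsigma_j}$ in pure $(k-1)$-producible states, the fidelity collapses to $F(\ket\psi,\varsigma) = \bra\psi\varsigma\ket\psi = \sum_j q_j |\bra\psi\ket{\varsigma_j}|^2$, which is maximised at an extreme point; hence the pure-state infimum in~\eqref{eq:ent_convexroof} already matches the infimum over all (possibly mixed) $\varsigma \in \mathcal{P}^{(k-1)}$.

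For the direction $E_G^{(k)}(\rho) \geq \inf_{\varsigma \in \mathcal{P}^{(k-1)}}\bigl(1 - F(\rho,\varsigma)\bigr)$, I would take an optimal convex-roof decomposition $\rho = \sum_i p_i \ket{\psi_i}\bra{\psi_i}$ for $E_G^{(k)}(\rho)$ and, for each $i$, an optimal pure $\ket{\varsigma_i}$ with $\ket{\varsigma_i}\bra{\varsigma_i} \in \mathcal{P}^{(k-1)}$. The state $\varsigma := \sum_i p_i \ket{\varsigma_i}\bra{\varsigma_i}$ lies in $\mathcal{P}^{(k-1)}$ by convexity, and joint concavity of the Uhlmann fidelity yields $F(\rho,\varsigma) \geq \sum_i p_i F(\ket{\psi_i},\ket{\varsigma_i})$, which rearranges to $1 - F(\rho,\varsigma) \leq \sum_i p_i E_G^{(k)}(\ket{\psi_i}) = E_G^{(k)}(\rho)$.

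The reverse inequality is the main obstacle and relies on Uhlmann's theorem. Given any $\varsigma \in \mathcal{P}^{(k-1)}$, fix a decomposition $\varsigma = \sum_i q_i \ket{\varsigma_i}\bra{\varsigma_i}$ with $\ket{\varsigma_i}\bra{\varsigma_i} \in \mathcal{P}^{(k-1)}$ and its canonical purification $\ket{\Psi_\varsigma} = \sum_i \sqrt{q_i}\ket{\varsigma_i}\ket{i}_A$. Uhlmann's theorem identifies $\sqrt{F(\rho,\varsigma)}$ with $\max |\bra{\Psi_\rho}\ket{\Psi_\varsigma}|$ over purifications of $\rho$; since every purification of $\rho$ in a sufficiently large ancilla, paired with the fixed basis $\{\ket{i}_A\}$, is of the form $\sum_i \sqrt{p_i}\ket{\psi_i}\ket{i}$ for some decomposition $\rho = \sum_i p_i \ket{\psi_i}\bra{\psi_i}$, this supplies a decomposition of $\rho$ with $\sqrt{F(\rho,\varsigma)} = \bigl|\sum_i \sqrt{p_i q_i}\,\bra{\psi_i}\ket{\varsigma_i}\bigr|$. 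Applying the Cauchy--Schwarz inequality to the sequences $\bigl(\sqrt{p_i}\,|\bra{\psi_i}\ket{\varsigma_i}|\bigr)_i$ and $(\sqrt{q_i})_i$ (and using $\sum_i q_i = 1$) then gives $F(\rho,\varsigma) \leq \sum_i p_i F(\ket{\psi_i},\ket{\varsigma_i})$. Because $\ket{\varsigma_i}\bra{\varsigma_i}\in\mathcal{P}^{(k-1)}$, the definition of the pure-state geometric measure yields $F(\ket{\psi_i},\ket{\varsigma_i}) \leq 1 - E_G^{(k)}(\ket{\psi_i})$, so this particular decomposition certifies $E_G^{(k)}(\rho) \leq \sum_i p_i E_G^{(k)}(\ket{\psi_i}) \leq 1 - F(\rho,\varsigma)$. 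Taking the infimum over $\varsigma$ closes the argument; compactness of $\mathcal{P}^{(k-1)}$ and $\incoh{k-1}$ in finite dimension guarantees that all infima above are attained, so no $\varepsilon$-approximation is required.
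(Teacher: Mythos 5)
Your second inequality (the Uhlmann-purification argument showing $E_G^{(k)}(\rho) \leq 1 - F(\rho,\varsigma)$ for every $\varsigma$) is correct, and it transparently unpacks the machinery that the paper instead imports as a black box from Theorem~2 of Ref.~\cite{streltsov_2010}. The gap is in the first direction. There you take the optimal convex-roof decomposition $\rho = \sum_i p_i \proj{\psi_i}$, set $\varsigma = \sum_i p_i \proj{\varsigma_i}$ with the \emph{same} weights, and invoke ``joint concavity of the Uhlmann fidelity'' to claim $F(\rho,\varsigma) \geq \sum_i p_i F(\ket{\psi_i},\ket{\varsigma_i})$. That inequality is the joint concavity of the \emph{squared} fidelity $F = \big(\Tr\sqrt{\sqrt{\rho}\sigma\sqrt{\rho}}\big)^2$, which is not an available result: the standard facts are joint (indeed strong joint) concavity of the root fidelity $\sqrt{F}$ and concavity of $F$ in each argument \emph{separately}; joint concavity of $F$ itself is a well-known open question and cannot be cited. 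If you instead apply the legitimate strong joint concavity of $\sqrt{F}$ to your uniform-weight mixture, you only obtain $F(\rho,\varsigma) \geq \big(\sum_i p_i \sqrt{F_i}\big)^2$ with $F_i = F(\ket{\psi_i},\ket{\varsigma_i})$, and by Cauchy--Schwarz $\big(\sum_i p_i\sqrt{F_i}\big)^2 \leq \sum_i p_i F_i$, so the bound lands on the wrong side of the quantity you need.

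The fix is the reweighting trick that the paper itself uses in the proof of Theorem~\ref{thm:geometric}: choose $\varsigma = \sum_i q_i \proj{\varsigma_i}$ with $q_i = p_i F_i / \big(\sum_j p_j F_j\big)$ rather than $q_i = p_i$. Strong joint concavity,
\begin{equation}
\sqrt{F\Big(\sum_i p_i \proj{\psi_i},\, \sum_i q_i \proj{\varsigma_i}\Big)} \;\geq\; \sum_i \sqrt{p_i q_i F_i} \;=\; \frac{\sum_i p_i F_i}{\sqrt{\sum_j p_j F_j}} \;=\; \sqrt{\sum_i p_i F_i},
\end{equation}
then gives exactly $F(\rho,\varsigma) \geq \sum_i p_i F_i = 1 - E_G^{(k)}(\rho)$, and since $\varsigma$ still lies in $\produc{k-1}$ (respectively $\incoh{k-1}$) by convexity, the first direction follows. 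With this repair your proof is complete and is a legitimate, self-contained alternative to the paper's route, which instead cites the identity $\max_{\sigma\in X}F(\rho,\sigma) = \max_{\{p_i,\rho_i\}}\sum_i p_i\max_{\ket{\delta}\in\operatorname{ext}(X)}F(\rho_i,\proj{\delta})$ from Ref.~\cite{streltsov_2010} and only adds the refinement from mixed-state to pure-state decompositions.
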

\begin{proof}Let $X$ denote either $\mathcal{P}^{(k-1)}$ or $\incoh{k-1}$. We know that $X$ is a closed convex set with its extremal points $\operatorname{ext}(X)$ given by pure states, so by the result of Thm. 2 in the appendix of \cite{streltsov_2010} we have
\begin{equation}
\label{eq:mixed_convex_roof}
\max_{\sigma \in X} F(\rho, \sigma) = \max_{\{p_i, \rho_i\}}\sum_i p_i \, \max_{\ket{\delta}\in \operatorname{ext}(X)} F(\rho_i, \ket\delta \bra\delta)
\end{equation}
where the maximization is performed over all convex mixed-state decompositions of $\rho = \sum_i p_i \rho_i$. Let $\{p_i, \rho_i\}$ be the decomposition which realizes the first maximization on the right-hand side, and note that every such $\rho_i$ can in turn be expressed as a convex decomposition into pure states $\ket{\psi^i_j}$ as $\rho_i = \sum_j q^i_j \ket{\psi^i_j}\bra{\psi^i_j}$. Let $\{\ket{\delta_i}\}$ be the states which realize the second maximization for each $\rho_i$. We then get
\begin{equation}
\begin{aligned}
\max_{\sigma \in X} F(\rho, \sigma) &= \sum_i p_i \, \max_{\ket{\delta}\in \operatorname{ext}(X)} F(\rho_i, \ket\delta \bra\delta)\\
&= \sum_{i} p_i \,F(\rho_i, \ket{\delta_i}\bra{\delta_i})\\
&= \sum_{i,j} p_i q^i_j \braket{\delta_i | \psi^i_j} \braket{\psi^i_j | \delta_i}
\end{aligned}
\end{equation}
which shows that
the maximum in Eq.~\eqref{eq:mixed_convex_roof} is in fact always reached by a pure-state decomposition of $\rho = \sum_{i,j} p_i q^i_j \ket{\psi^i_j}\bra{\psi^i_j}$.
\end{proof}

Remarkably, under the geometric quantifiers, the $k$-coherence of any state and the converted $k+1$-partite entanglement are in fact \emph{equal}. The result relies on the following lemma, which shows that for any pure state $\ket{\Psi'}$ obtained from the conversion protocol, it suffices to optimize the distance-based quantifier of $k+1$-partite entanglement over the set of $k$-producible output states of the protocol, instead of the whole set of $k$-producible states.
\begin{lemma}
\label{lem:purefidelity}
Given a state of the form
\begin{equation}\label{eq:protocol}
\ket{\psi} = \sum_{i=1}^{d} c_i \ket{i}\ket{\underline{2^{d-i}}}
\end{equation}
where we can take $|c_1|\geq |c_2|\geq \ldots \geq |c_d|$ without loss of generality, the closest $k$-producible state with respect to the fidelity-based geometric measure of entanglement can be chosen as a state $\ket{\psi^c} \in \mathcal{B}^{(k)}$.

In other words, there exists $\ket{\psi^c} \in \mathcal{B}^{(k)}$ such that
\begin{equation}
\max_{\ket\varsigma \in \mathcal{P}^{(k)}} F(\ket{\psi},\ket{\varsigma}) = \max_{\ket\sigma \in \mathcal{B}^{(k)}} F(\ket\psi, \ket\sigma) = F(\ket{\psi},\ket{\psi^c}).
\end{equation}
\end{lemma}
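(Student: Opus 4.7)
My plan is to use monotonicity of the fidelity under partial trace, applied to the subsystem containing the qudit. I first evaluate $\max_{\ket\sigma \in \mathcal{B}^{(k)}} F(\ket\psi,\ket\sigma)$ directly and then show that no $\ket\varsigma \in \mathcal{P}^{(k)}$ can exceed this value. By Thm.~\ref{thm:coherence_pure} applied to a single-qudit input, any $\ket\sigma \in \mathcal{B}^{(k)}$ has the form $\sum_i \alpha_i \ket{e_i}$ with $\ket{e_i}:=\ket{i}\ket{\underline{2^{d-i}}}$ and at most $k-1$ of the $\alpha_i$ nonzero. Since the $\{\ket{e_i}\}$ are mutually orthonormal, Cauchy--Schwarz together with the ordering $|c_1|\geq\ldots\geq |c_d|$ yields $\max_{\ket\sigma\in\mathcal{B}^{(k)}}F(\ket\psi,\ket\sigma) = \sum_{i=1}^{k-1}|c_i|^2$, attained by $\ket{\psi^c}\propto\sum_{i=1}^{k-1}c_i\ket{e_i}$.

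For the converse bound, I would take an arbitrary pure $\ket\varsigma\in\mathcal{P}^{(k)}$ with product partition $\{A_1,\ldots,A_m\}$, each $|A_j|\leq k$, and let $\hat A = \{0\}\cup \hat S$ denote the block containing the qudit, so that $\hat S\subseteq \{1,\ldots,d\}$ with $|\hat S|\leq k-1$. Contractivity of the fidelity under the partial trace over $\hat A^c$ gives $F(\ket\psi,\ket\varsigma)\leq F(\psi_{\hat A},\varsigma_{\hat A})$; since $\hat A$ is a block of the partition, $\varsigma_{\hat A}$ is pure and this upper bound collapses to $\bra{\varsigma_{\hat A}}\psi_{\hat A}\ket{\varsigma_{\hat A}}\leq \lambda_{\max}(\psi_{\hat A})$.

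The decisive step is then a short spectral analysis of $\psi_{\hat A}$. Because each $e_i$ carries a single $1$ at position $i$, its restriction $e_i^{\hat A^c}$ is the all-zero string whenever $i\in\hat S$ and otherwise a single-$1$ string on $\hat A^c$. The resulting inner products satisfy $\braket{e_j^{\hat A^c}|e_i^{\hat A^c}}=1$ for $i,j\in\hat S$, $\delta_{ij}$ for $i,j\notin\hat S$, and $0$ otherwise, so $\psi_{\hat A}$ collapses to the orthogonal sum $\ket\chi\bra\chi + \sum_{i\notin\hat S}|c_i|^2\,\ket{i,0^{|\hat S|}}\bra{i,0^{|\hat S|}}$, with $\ket\chi := \sum_{i\in\hat S}c_i\ket{i,e_i^{\hat S}}$. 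All of these rank-one pieces sit in mutually orthogonal qudit sectors, so the spectrum of $\psi_{\hat A}$ is $\{\sum_{i\in\hat S}|c_i|^2\}\cup\{|c_i|^2\}_{i\notin\hat S}$; the ordering of the $|c_i|$ together with $|\hat S|\leq k-1$ then bounds each eigenvalue by $\sum_{i=1}^{k-1}|c_i|^2$, closing the chain of inequalities.

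The one nontrivial step I anticipate is the explicit partial trace computation---carefully tracking the three cases in $\braket{e_j^{\hat A^c}|e_i^{\hat A^c}}$ and verifying that the resulting rank-one pieces live in mutually orthogonal qudit sectors---after which the bound on $\lambda_{\max}(\psi_{\hat A})$ is automatic. The essential conceptual observation is that choosing $\hat A$ as the subsystem to trace out is precisely what turns the $k$-producibility constraint $|\hat A|\leq k$ into a clean bound on the spectrum of $\psi_{\hat A}$; any other choice of subsystem would fail to exploit the special flag structure of the $\ket{e_i}$.
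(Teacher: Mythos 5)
Your proposal is correct and follows essentially the same route as the paper: the $\mathcal{B}^{(k)}$ optimization via Cauchy--Schwarz is identical, and for general $k$-producible states both arguments isolate the partition block containing the qudit and bound the overlap by the largest eigenvalue of a reduced density matrix of $\ket{\psi}$. The only cosmetic difference is that you compute the spectrum of $\psi_{\hat A}$ (via data processing and purity of $\varsigma_{\hat A}$) while the paper invokes the Shimony/Wei--Goldbart formula and computes the reduced state on the complementary $d-n$ qubits --- but since $\ket\psi$ is pure these two reduced states have the same nonzero spectrum, so the computations coincide.
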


\begin{proof}
Recall that, by Thm. 1, a state in $\mathcal{B}^{(k)}$ has $p$ nonzero coefficients ($2 \leq p \leq d$) iff it is $p+1$-partite entangled. Therefore, we have that any $\ket\omega \in \mathcal{B}^{(k)}$ is given by
\begin{equation}
\ket\omega = \sum_{j \in \mathcal{J}} d_j \ket{j}\ket{\two{j}}
\end{equation}
where $\mathcal{J}$ is a subset of at most $k-1$ elements of $\{1,\ldots,d\}$. Hence
\begin{equation}
\begin{aligned}
F(\ket\psi, \ket\omega) &= |\braket{\psi | \omega}|^2\\
&=  \left| \sum_{i=1}^{d} \sum_{j \in \mathcal{J}} \overline{c_i} d_j \braket{i|j} \braket{\two{i}|\two{j}} \right|^2\\
&= \left| \sum_{j\in \mathcal{J}} \overline{c_j} d_j \right|^2\\
&\leq \sum_{j\in \mathcal{J}} |c_j|^2 \sum_{j\in \mathcal{J}} |d_j|^2\\
&= \sum_{j\in \mathcal{J}} |c_j|^2\\
&\leq \sum_{i=1}^{k-1} |c_i|^2
\end{aligned}
\end{equation}
using the Cauchy-Schwarz inequality and the fact that the coefficients $|c_i|$ are arranged in non-increasing order, so the choice of the first $k-1$ coefficients maximizes the expression. This bound is saturated by the renormalised state given by
\begin{equation}
\ket{\psi^{c}} = \frac{\sum_{i=1}^{k-1} c_i \ket{i}\ket{\two{i}}}{\sqrt{\sum_{j=1}^{k-1} |c_i|^2}} \;\in\mathcal{B}^{(k)}
\end{equation}
and so it is tight.

Therefore, we need to show that for any $\ket\phi \in \mathcal{P}^{(k)}$ we have $|\braket{\psi | \phi}|^2 \leq \sum_{j=1}^{k-1} |c_i|^2$.

For a general $k$-producible state, we have to consider several different cases corresponding to different partitions. Consider a general $k$-producible state $\ket{\phin}$, where the qudit is entangled with $n$ qubits and the remaining $(d-n)$ qubits are in some arrangement --- we do not explicitly assume anything about the state of the $(d-n)$ qubits. The state has the following form:
\begin{equation}
\begin{aligned}
\ket{\phin} &= \ket{\text{qudit + } n \text{ qubits}}\otimes\ket{d-n \text{ qubits}}\\
&= \ket{\phin_A} \otimes \ket{\phin_B}.
\end{aligned}
\end{equation}
We necessarily have that $n \leq k-1$, because otherwise $\ket\phin$ would not be $k$-producible.

Note that for a general bipartite state $\ket{\alpha_{AB}}$, it follows from the Schmidt decomposition that \cite{shimony_1995,wei_2003}
\begin{equation}
\begin{aligned}
\max_{\text{separable }\ket{\eta_{AB}}} |\braket{\eta_{AB}|\alpha_{AB}}|^2 = \lambda_{\max} \Big(\Tr_A \big(\ket{\alpha_{AB}}\bra{\alpha_{AB}}\big)\Big)
\end{aligned}
\end{equation}
where $\lambda_{\max}$ denotes the largest eigenvalue. In our case, we can treat $\ket\psi$ as a bipartite state where the subsystem $A$ is comprised of the qudit and $n$ qubits, and $\ket\phin$ is the corresponding product state. We then have
\begin{equation}
\begin{aligned}
\Tr_{\text{(qudit)}} \big(\ket\psi \bra\psi\big) &= \sum_{i,j,m=1}^{d} c_i \overline{c_j} \braket{m|i}\braket{j|m}\ket{\two{i}}_d\bra{\two{j}}\\
&= \sum_{i=1}^d |c_i|^2 \ket{\two{i}}_d\bra{\two{i}}\\
&\eqqcolon \rho'\\
\Tr_{(n \text{ leftmost qubits)}} (\rho') &= \sum_{j=1}^{n} |c_j|^2 \ket{00\ldots 0}\bra{00\ldots 0} +\\
&\quad +\sum_{i=n+1}^{d} |c_i|^2 \ket{\underline{2^{d-i}}}_{d-n}\bra{\underline{2^{d-i}}}\\
&\eqqcolon \rhon
\end{aligned}
\end{equation}
where we have introduced a subscript in the notation $\ket{{2^{d-i}}}_{d-n}$ to indicate that there are $d-n$ qubits left over.
Notice that the only possible nonzero eigenvalues of $\rhon$ are given by: $\sum_{i=1}^{n} |c_i|^2, |c_{n+1}|^2, |c_{n+2}|^2, \ldots, |c_{d}|^2$. Since the coefficients $|c_i|$ are arranged in non-increasing order by assumption, it follows that
\begin{equation}
\max_{\ket\phin \in \mathcal{P}^{(k)}} |\braket{\phin | \psi}|^2 \leq \lambda_{\max}\left(\rhon\right) = \sum_{i=1}^{n} |c_i|^2 \leq \sum_{j=1}^{k-1} |c_i|^2
\end{equation}
as required.
\end{proof}

The above Lemma finally allows us to prove the equality between the geometric quantifiers for $k$-coherence and $k+1$-partite entanglement in the first step of the conversion protocol, which is the last main result of this paper.
\begin{theorem}
\label{thm:geometric}
Given $\rho$ and the transformed state $\rho'$ as described in the conversion protocol, we have
\begin{equation}
C_G^{(k)} (\rho) = E_G^{(k+1)} (\rho').
\end{equation}
\end{theorem}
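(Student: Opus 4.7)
The $\geq$ direction $C_G^{(k)}(\rho)\geq E_G^{(k+1)}(\rho')$ is already delivered by Theorem \ref{thm:distance_based_relations} applied to the contractive distance $D=1-F$, so the plan is to establish the reverse inequality $C_G^{(k)}(\rho)\leq E_G^{(k+1)}(\rho')$ and then chain the two. I would first settle the pure-state case using Lemma \ref{lem:purefidelity}, and then lift to mixed states via the convex-roof representation of the geometric measures supplied by Proposition \ref{prop:geometric_fidelity}.

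For pure $\ket{\psi}\in\H$ with $\ket{\psi'}=U_A(\ket{\psi}\otimes\ket{0}^{\otimes d})$, Lemma \ref{lem:purefidelity} guarantees that some $\ket{\psi^c}\in\mathcal{B}^{(k)}$ maximizes $F(\ket{\psi'},\ket{\varsigma})$ over $\ket{\varsigma}\in\produc{k}$. By the characterization used in that lemma, $\ket{\psi^c}=\sum_{j\in\mathcal J} d_j\ket{j}\ket{\underline{2^{d-j}}}$ with $|\mathcal J|\leq k-1$, so it equals $U_A(\ket{\sigma}\otimes\ket{0}^{\otimes d})$ for $\ket{\sigma}=\sum_{j\in\mathcal J} d_j\ket{j}$ of coherence rank at most $k-1$; in particular $\ket{\sigma}\in\incoh{k-1}$. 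Unitarity of $U_A$ together with the fixed product ancilla then gives $F(\ket{\psi'},\ket{\psi^c})=F(\ket{\psi},\ket{\sigma})$, whence $E_G^{(k+1)}(\ket{\psi'})=1-F(\ket{\psi},\ket{\sigma})\geq C_G^{(k)}(\ket{\psi})$. Combined with the bound from Theorem \ref{thm:distance_based_relations}, this yields the desired equality on pure states.

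For mixed $\rho$, use Proposition \ref{prop:geometric_fidelity} to view both $C_G^{(k)}$ and $E_G^{(k+1)}$ as convex-roof extensions of their pure-state restrictions. Fix an optimal pure-state decomposition $\rho'=\sum_i p_i \ket{\psi_i'}\bra{\psi_i'}$ realising $E_G^{(k+1)}(\rho')$. Since $U_A^\dagger\rho' U_A=\rho\otimes\ket{0}\bra{0}^{\otimes d}$, every $\ket{\psi_i'}$ in the support of $\rho'$ has the form $U_A(\ket{\phi_i}\otimes\ket{0}^{\otimes d})$ with $\ket{\phi_i}\in\operatorname{supp}(\rho)$; tracing out the ancilla then shows that $\{p_i,\ket{\phi_i}\}$ is a valid pure-state decomposition of $\rho$. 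Applying the pure-state equality termwise gives $E_G^{(k+1)}(\rho')=\sum_i p_i\, E_G^{(k+1)}(\ket{\psi_i'})=\sum_i p_i\, C_G^{(k)}(\ket{\phi_i})\geq C_G^{(k)}(\rho)$, the last step being the convex-roof definition of $C_G^{(k)}$. Chaining with Theorem \ref{thm:distance_based_relations} forces equality.

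The main obstacle is already absorbed into Lemma \ref{lem:purefidelity}, whose content is that no $k$-producible state can achieve higher fidelity with the protocol output $\ket{\psi'}$ than the canonical protocol-form states in $\mathcal{B}^{(k)}$. Once that is in hand, the remaining work is routine bookkeeping of decompositions of $\rho'$ versus $\rho$, precisely the one-to-one correspondence already exploited in the proofs of Theorems \ref{thm:map_to_k1} and \ref{thm:coherence_pure}.
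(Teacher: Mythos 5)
Your proof is correct, and while it rests on the same two pillars as the paper's --- Lemma~\ref{lem:purefidelity} for the pure-state core and Theorem~\ref{thm:distance_based_relations} (via Proposition~\ref{prop:geometric_fidelity}) for the direction $C_G^{(k)}(\rho)\geq E_G^{(k+1)}(\rho')$ --- it handles the mixed-state lifting by a genuinely different route. The paper stays in the closest-state picture: it takes the optimal convex-roof decomposition $\{p_i,\ket{\phi_i}\}$ of $\rho'$, replaces each $\ket{\phi_i}$ by its closest $k$-producible state $\ket{\xi_i}\in\mathcal{B}^{(k)}$, assembles an explicit candidate $\chi=\sum_j q_j\proj{\xi_j}$ with reweighted coefficients, and invokes the strong joint concavity of $\sqrt{F}$ to certify that $\chi$ is the closest $k$-producible state to $\rho'$; since $\chi\in\mathcal{B}^{(k)}$, the equality condition of Theorem~\ref{thm:distance_based_relations} fires. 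You instead prove the scalar equality on pure states first, via the fidelity-preserving identification $\ket{\psi^c}=U_A(\ket{\sigma}\otimes\ket{0}^{\otimes d})$ with $\ket{\sigma}\in\incoh{k-1}$, and then transport the optimal decomposition of $\rho'$ back through the isometry, summing the pure-state equality termwise against the convex-roof definition of $C_G^{(k)}$. This avoids both the construction of $\chi$ and the joint-concavity step, and is essentially the ``isometric correspondence of decompositions'' argument the paper only sketches in the closing remark of Sec.~\ref{sec:quantification}; run in both directions it would give the equality without appealing to Theorem~\ref{thm:distance_based_relations} at all. What the paper's longer route buys is the explicit identification of the closest $k$-producible state to a mixed $\rho'$ and the fact that it lies in $\mathcal{B}^{(k)}$, which is what feeds the subsequent SDP discussion. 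The only points to keep tidy in your version are the attainment of the convex-roof infimum (automatic in finite dimensions, and assumed identically by the paper) and the pull-back of support elements of $\rho'$ to support elements of $\rho$ under $U_A^\dagger$ --- both of which you address correctly.
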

\begin{proof}
The proof follows the methods of Refs.~\cite{streltsov_2010} and \cite{streltsov_2015}.

By Thm.~\ref{thm:distance_based_relations} and Prop.~\ref{prop:geometric_fidelity}, this result amounts to showing that there exists a state $\chi \in \mathcal{B}^{(k)}$ which is the closest $k$-producible state to $\rho'$ with respect to the distance given by $D(\rho,\sigma) = 1 - F(\rho,\sigma)$. In other words, we need to find $\chi \in \mathcal{B}^{(k)}$ s.t. $E_G^{(k+1)}(\rho') = 1 - F(\rho', \chi)$.

First, let $\{p_i, \ket{\phi_i}\}$ be the optimal convex decomposition of $\rho'$ which realizes the infimum in the convex roof extension of $E_G^{(k+1)}$, that is,
\begin{equation}
\label{eq:optimal_decomp}
E_G^{(k+1)} (\rho') = \sum_i p_i \, E_G^{(k+1)}(\ket{\phi_i}).
\end{equation}
Define $\{\ket{\xi_i}\} \in \mathcal{P}^{(k)}$ to be the closest $k$-producible states to each of the states $\{\ket{\phi_i}\}$ with respect to the fidelity-based distance, that is,
\begin{equation}
\label{eq:optimal_fidelity}
E_G^{(k+1)} (\ket{\phi_i}) = 1 - F(\ket{\xi_i}, \ket{\phi_i}) \;\;\;\forall\, i.
\end{equation}

Now, notice that the support of $\rho'$ is spanned by states of the form
\begin{equation}
\ket{\psi'} = \sum_i p_i \, \ket{i} \ket{\underline{2^{d-i}}} \; \in \mathcal{B}^{(k)},
\end{equation}
which means that any pure state in a convex decomposition of $\rho'$ can be expressed as a complex linear combination of such $\ket{\psi'}$. In particular, each of the states $\{\ket{\phi_i}\}$ can be written as
\begin{equation}
\label{eq:expansion_purestate}
\ket{\phi_i} = \sum_{j} r_j \, \ket{j} \ket{\underline{2^{d-j}}}.
\end{equation}
By Lemma \ref{lem:purefidelity}, the closest $k$-producible state to each $\ket{\phi_i}$ is a state $\ket{\xi_i} \in \mathcal{B}^{(k)}$, and so we have
\begin{equation}
\ket{\xi_i} = \sum_{j=1}^{k-1} e_j \ket{j} \ket{\underline{2^{d-j}}}.
\end{equation}
Now, define
\begin{equation}
\chi = \sum_j q_j \ket{\xi_j}\bra{\xi_j}
\end{equation}
with coefficients
\begin{equation}
q_j = p_j \frac{1 - E_G^{(k+1)}\left(\ket{\phi_j}\right)}{1 - E_G^{(k+1)}\left(\rho'\right)}.
\end{equation}
Since each $\ket{\xi_i}$ is in $\mathcal{B}^{(k)}$, we get that $\chi \in \mathcal{B}^{(k)}$. Recalling that $\sqrt{F}$ satisfies the so-called strong joint concavity property \cite{nielsen_2011}, defined as
\begin{equation}
\begin{aligned}
&\sqrt{F\left( \sum_i a_i \ket{\alpha_i}\bra{\alpha_i},\, \sum_i b_i \ket{\beta_i}\bra{\beta_i} \right)} \\&\geq \sum_i \sqrt{a_i\,b_i\,F\left( \ket{\alpha_i}, \ket{\beta_i} \right)},
\end{aligned}
\end{equation}
we get
\begin{equation}\begin{aligned}
\sqrt{F(\rho', \chi)} &\geq \sum_j \sqrt{p_j\,q_j\,F\left(\ket{\phi_j}, \ket{\xi_j}\right)}\\
&= \sum_j \sqrt{p_j^2 \frac{F\left(\ket{\phi_j}, \ket{\xi_j}\right)^2}{1 - E_G^{(k+1)}\left(\rho'\right)}}\\
&= \sqrt{1-E_G^{(k+1)}\left(\rho'\right)}
\end{aligned}
\end{equation}
which follows by Eqs.~\eqref{eq:optimal_fidelity} and \eqref{eq:optimal_decomp}. This gives
\begin{equation}
E_G^{(k+1)}\left(\rho'\right) \geq 1 - F(\rho', \chi)
\end{equation}
and since $\chi \in \mathcal{P}^{(k)}$ is a convex combination of $k$-producible pure states, by Prop.~\ref{prop:geometric_fidelity} we also have
\begin{equation}
E_G^{(k+1)}\left(\rho'\right) \leq 1 - F(\rho', \chi)
\end{equation}
and so $\chi$ is the closest $k$-producible state to $\rho'$. But since $\chi \in \mathcal{B}^{(k)}$ by construction, Thm.~\ref{thm:distance_based_relations} gives
\begin{equation}
C_G^{(k)} (\rho) = E_G^{(k+1)} (\rho')
\end{equation}
as required.
\end{proof}

The above result has implications for the quantification of $k+1$-partite entanglement, since for any state of the form $\rho'$ as obtained from the conversion protocol in Thm.~\ref{thm:coherence_pure}, the entanglement can be quantified by considering the quantification of $k$-coherence instead. It has been recently shown that optimization over sets of $k$-coherent states can be expressed as efficiently computable semidefinite programs (SDPs) \cite{ringbauer_2017}, and together with the fact that the computation of the fidelity function can be cast as an SDP as well \cite{watrous_2009,watrous_2013}, we have that the geometric measure of $k+1$-partite entanglement of any state $\rho'$ can be quantified by the following SDP:
\begin{equation}\begin{aligned}
\sqrt{1-E_G^{(k+1)}\left(\rho'\right)} =\; &\text{max}\hspace{-5pt}&& \Re\Tr(X)\\
&\text{s.t.}&& \begin{pmatrix}\rho' & X\\X^\dagger & \displaystyle\sum_{I \in \mathscr{Q}_{k-1}} P_I Y_I P_I\end{pmatrix} \geq 0\\
&&&\Tr\left(\sum_{I \in \mathscr{Q}_{k-1}} P_I Y_I P_I\right) = 1\\
&&&Y_I \geq 0 \;\, \forall \,I \in \mathscr{Q}_{k-1}\\
&&&X \in \mathbb{C}^{d \times d}
\end{aligned}\end{equation}
where $\mathscr{Q}_{k-1}$ denotes the set of all $k$-element combinations from $\{1, 2, \ldots, d\}$ and $P_I$ denotes the orthogonal projection $P_I = \sum_{i\in I} \ket{i}\bra{i}$ \cite{ringbauer_2017}.

Moreover, in a very similar way to the proof of Lemma \ref{lem:purefidelity}, one can derive a closed formula for $C_{G}^{(k)}$ of arbitrary pure states $\ket{\psi}$ as
\begin{equation}
  C_G^{(k)}(\ket\psi) = 1 - \sum_{i=1}^{k-1} |c^\downarrow_i|^2
\end{equation}
where $c^\downarrow_i$ denotes the $i$th largest coefficient (by absolute value) of $\ket\psi$. This entails a closed formula for $E_{G}^{(k+1)}$ of the corresponding states $\ket{\Psi'}$ at the output of the conversion protocol. For completeness, we present a full proof of this fact in the Appendix~\ref{sec:appendix}.

We note that quantitative relations can also be obtained for other measures of coherence and entanglement based on the convex roof, similarly to the cases in \cite{streltsov_2015,yuan_2015,winter_2016,chin_2017}. Such monotones are built by taking suitable functions defined on pure states \cite{vidal_2000,du_2015} and extending them to mixed states by minimizing over all pure-state decompositions \cite{uhlmann_1998}. Since the conversion of $k$-coherence into multipartite entanglement is isometric, there is a one-to-one correspondence between such decompositions for input and output states, and close relations between equivalent measures can be derived \cite{zhu_2017}.


\section{Conclusions}

We have investigated the relation between the nonclassicality (in the form of superposition) of a single quantum system and the genuine multipartite entanglement which can be obtained from it in physical processes. We have shown that a faithful conversion of multilevel nonclassicality into multipartite entanglement is always possible by mapping superpositions between $k$ levels of a system into entanglement between the system and $k$ ancillas via unitary operations. As an explicit implementation of this result, we presented a reversible protocol for the conversion of $k$-coherence into genuine multipartite entanglement, showing that the strength of the final entanglement among all parties is bounded by the initial amount of quantum coherence, and can in fact be exactly equivalent under a suitable choice of geometric quantifiers.

This reveals a qualitative and quantitative connection between multilevel nonclassicality and multipartite entanglement, generalizing previous results in the resource theory of quantum coherence \cite{baumgratz_2014,streltsov_2015,streltsov_2016}, and further contributing towards the formalization of nonclassicality as a resource \cite{killoran_2016,theurer_2017,mukhopadhyay_2017,regula_2018,tan_2017}. In particular, multilevel coherence and multipartite entanglement provide significant operational advantages over the resources of standard quantum coherence and bipartite entanglement \cite{horodecki_2009,walter_2016,ringbauer_2017} and are key ingredients for practical applications such as quantum computation, quantum networks, sensing, and metrology \cite{kimble_2008,braun_2017,pezze_2016,walter_2016}. By providing constructive schemes for their interchange in compliance with the respective resource theories, our work lays the foundation for a complete characterization of the interrelations between the two fundamental resources, and may further serve as an inspiration for novel hybrid approaches to quantum technologies.

\begin{acknowledgments}

We acknowledge financial support from the European Research Council (ERC) under the Starting Grant GQCOP (Grant No.~637352), the European Union's Horizon 2020 Research and Innovation Programme under the Marie Sk\l{}odowska-Curie Action OPERACQC (Grant Agreement No.~661338), the Foundational Questions
Institute under the Physics of the Observer Programme (Grant No.~FQXi-RFP-1601), and the National Science
Center in Poland (POLONEZ Grant No.~UMO-2016/21/P/ST2/04054).

 \textit{Note.} --- During completion of this work, an independent investigation of $k$-coherence and its relation to bipartite entanglement of Schmidt rank $k$ was presented by S.~Chin \cite{chin_2017,chin_2017-1}.
\end{acknowledgments}

\bibliography{main}

\appendix 
\section{}\label{sec:appendix}

\begin{proposition}
Given an orthonormal basis $\{\ket{i}\}$, for any pure state $\ket\psi = \sum_i c_i \ket{i}$ it holds that
\begin{equation}
C_G^{(k)}(\ket\psi) = 1 - \sum_{i=1}^{k-1} |c^\downarrow_i|^2
\end{equation}
where $c^\downarrow_i$ denotes the $i$th largest coefficient (by absolute value) of $\ket\psi$.
\end{proposition}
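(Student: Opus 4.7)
The plan is to mirror the argument used in Lemma~\ref{lem:purefidelity}, which already contains essentially all of the ingredients. First I would invoke Proposition~\ref{prop:geometric_fidelity} to rewrite
\[
C_G^{(k)}(\ket\psi) \;=\; 1-\sup_{\sigma \in \incoh{k-1}} F(\ket\psi,\sigma),
\]
so that the task reduces to computing $\max_{\sigma \in \incoh{k-1}} F(\ket\psi,\sigma)$. Since $\ket\psi$ is pure, $F(\ket\psi,\sigma)=\braket{\psi|\sigma|\psi}$ is linear in $\sigma$, and hence the maximum over the convex set $\incoh{k-1}$ is attained at an extremal point. The extremal points of $\incoh{k-1}$ are pure states, and a pure state is in $\incoh{k-1}$ iff its coherence rank is at most $k-1$. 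Therefore it suffices to maximize $|\braket{\psi|\sigma}|^2$ over pure $\ket\sigma = \sum_{j \in J} d_j \ket j$ with $J \subseteq \{1,\dots,d\}$, $|J|\leq k-1$, and $\sum_{j\in J}|d_j|^2=1$.

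Next I would apply Cauchy--Schwarz exactly as in the proof of Lemma~\ref{lem:purefidelity}:
\[
|\braket{\psi|\sigma}|^2 \;=\; \Bigl|\sum_{j\in J} \overline{c_j}\, d_j\Bigr|^2 \;\leq\; \Bigl(\sum_{j\in J}|c_j|^2\Bigr)\Bigl(\sum_{j\in J}|d_j|^2\Bigr) \;=\; \sum_{j\in J}|c_j|^2.
\]
Maximizing the right-hand side over all subsets $J$ of size at most $k-1$ obviously selects the indices corresponding to the $k-1$ largest values of $|c_j|^2$, yielding the upper bound $\sum_{i=1}^{k-1}|c^\downarrow_i|^2$.

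Finally I would exhibit an explicit saturating state: letting $J^\star$ denote the indices of the $k-1$ largest coefficients, the pure state
\[
\ket{\sigma^\star} \;=\; \frac{1}{\sqrt{\sum_{j\in J^\star}|c_j|^2}}\sum_{j\in J^\star} c_j \ket{j}
\]
lies in $\incoh{k-1}$ (it has coherence rank $\leq k-1$) and achieves equality in Cauchy--Schwarz, so $F(\ket\psi,\ket{\sigma^\star}) = \sum_{i=1}^{k-1}|c^\downarrow_i|^2$. This matches the upper bound and yields the claimed closed formula. I do not foresee a real obstacle: the only mildly non-trivial point is justifying the reduction from mixed to pure $(k-1)$-coherent states, which is immediate from linearity of the pure-state fidelity in $\sigma$ (alternatively, one can cite the extremality argument already used around Eq.~\eqref{eq:mixed_convex_roof}); the degenerate case where several $|c_j|$'s tie for the $(k-1)$th largest causes no trouble since any valid choice of $J^\star$ gives the same optimal value.
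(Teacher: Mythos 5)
Your proposal is correct and follows essentially the same route as the paper's own proof in the Appendix: a Cauchy--Schwarz bound on $|\braket{\psi|\eta}|^2$ over pure states of coherence rank at most $k-1$, followed by the explicit saturating state built from the $k-1$ largest coefficients. The only difference is your preliminary reduction from mixed to pure $(k-1)$-coherent states, which the paper sidesteps by working directly with the pure-state definition of $C_G^{(k)}$; this step is harmless but not needed.
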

\begin{proof}
A general state $\ket\eta \in \incoh{k-1}$ is given by
\begin{equation}
\ket\eta = \sum_{j \in \mathcal{I}} d_j \ket{j}
\end{equation}
where $\mathcal{I}$ is a subset of $k-1$ elements of the indices $\{1,\ldots,d\}$. Hence
\begin{equation}
\begin{aligned}
F(\ket\psi, \ket\eta) &= |\braket{\psi | \eta}|^2\\
&=  \left| \sum_{i=1}^{d} \sum_{j \in \mathcal{I}} \overline{c_i} d_j \braket{i|j} \right|^2\\
&= \left| \sum_{j\in \mathcal{I}} \overline{c_j} d_j \right|^2\\
&\leq \sum_{j\in \mathcal{I}} |c_j|^2 \sum_{j\in \mathcal{I}} |d_j|^2\\
&= \sum_{j\in \mathcal{I}} |c_j|^2\\
&\leq \sum_{i=1}^{k-1} |c^\downarrow_i|^2
\end{aligned}
\end{equation}
using the Cauchy-Schwarz inequality and the fact that the choice of $k-1$ largest coefficients $|c_i|$ maximizes the expression. The bound is tight, since one can always reach it by considering the state
\begin{equation}
\ket{\psi^c} = \frac{\sum_{i=1}^{k-1} c^\downarrow_i \ket{i^\downarrow}}{\sqrt{\sum_{i=1}^{k-1} |c^\downarrow_i|^2}} \; \in \incoh{k-1}
\end{equation}
where $\ket{i^\downarrow}$ are the basis vectors corresponding to the coefficients $c^\downarrow_i$ of $\ket\psi$. Therefore we have
\begin{equation}
C_G^{(k)}(\ket\psi) = 1 - \sup_{\ket{\eta} \in \incoh{k-1}} F(\ket\psi, \ket\eta) = 1 - \sum_{i=1}^{k-1} |c^\downarrow_i|^2.
\end{equation}
\end{proof}

\end{document}